\tikzstyle{vecArrow} = [thick, decoration={markings,mark=at position
\definecolor{darkgreen}{rgb}{0, 0.5, 0}
\definecolor{lightgreen}{rgb}{0.9,1,0.9}
\definecolor{lightred}{rgb}{1,0.7,0.7}
\definecolor{darkblue}{rgb}{0.3,0.3,1}
\definecolor{lightyellow}{rgb}{1,1,0.5}
\definecolor{darkyellow}{rgb}{0.8,0.8,0}
\definecolor{darkred}{rgb}{0.72,0.04,0.04}
\definecolor{BlueList}{HTML}{0078B4}
\def\showmore{\color{darkgreen} \bf}
\def\BibTeX{{ \rm B\kern-.05em{\sc i\kern-.025em b}\kern-.08em
    T\kern-.1667em\lower.7ex\hbox{E}\kern-.125emX}}
\newcolumntype{?}{!{\vrule width 1pt}}
\crefname{section}{Sec.}{Sec.}
\crefname{algorithm}{Alg.}{Alg.}
\crefname{figure}{Fig.}{Fig.}
\crefname{proposition}{Prop.}{Prop.}
\crefname{table}{Table}{Tables}
\crefname{definition}{Def.}{Def.}
\crefname{theorem}{Thm.}{Thm.}
\crefname{apptable}{Appendix Table}{Appendix Table}
\lstdefinestyle{MyCustomC}{language=C,tabsize=2}
\def\obfor{{\scshape For}\xspace}
\def\obsplit{{\scshape Split}\xspace}
\def\binsec{{\scshape Binsec}\xspace}
\def\triton{{\scshape Triton}\xspace}
\def\klee{{KLEE}\xspace}
\def\obbanescu{{\scshape Range Divider}\xspace}
\def\obwrite{{\scshape Write}\xspace}
\def\tasseup{\vspace*{-.3cm}}
\newtheoremstyle{test}%
  {\topsep}% measure of space to leave above the theorem. E.g.: 3pt
  {\topsep}% measure of space to leave below the theorem. E.g.: 3pt
  {\itshape}% name of font to use in the body of the theorem
  {0pt}% measure of space to indent
  {\indent\bfseries}% name of head font
  { }% punctuation between head and body
  { }% space after theorem head; " " = normal interword space
  {\thmname{#1}\thmnumber{ #2}\thmnote{ (#3)}}
\theoremstyle{test}
\newtheorem{theo}{Theorem}
\newtheorem{defi}{Definition}
\newcommand{\obs}[1]{#1'}
\def\pg#1{\noindent\textbf{#1.}}
\renewcommand{\paragraph}[1]{\smallskip\pg{#1}}
\newcolumntype{R}[1]{>{\raggedleft\let\newline\\\arraybackslash\hspace{0pt}}m{#1}}
\newcolumntype{C}[1]{>{\centering\let\newline\\\arraybackslash\hspace{0pt}}m{#1}}
\definecolor{lightgrey}{rgb}{0.9,0.9,0.9}
\definecolor{darkgrey}{rgb}{0.4,0.4,0.4}
\definecolor{darkred}{rgb}{0.72,0.04,0.04}
\newcommand{\chain}[1]{%
\begin{tikzpicture}[#1]%
    \draw (0,1ex) -- (1ex,1ex) ;
    \draw (1ex,1ex) -- (1ex,0) ;
    \draw[->] (1ex,0) -- (2ex,0) ;
\end{tikzpicture}%
}
    \def\baddef/{\textcolor{darkred}{\bf $\times$}}
    \def\verybaddef/{\textcolor{darkred}{\bf $\times\times$}}
    \def\gooddef/{\textcolor{darkgreen}{{\bf \checkmark}}}
\begin{document}

\def\mytitle{How to Kill Symbolic Deobfuscation for Free}
\title{\mytitle\\
{\smaller (or: Unleashing the Potential of Path-Oriented Protections)}\\
}

\author[1]{Mathilde Ollivier}
\author[1]{Sebastien Bardin}
\author[1]{Richard Bonichon}
\author[2]{Jean-Yves Marion}
\affil[2]{LORIA - \it Nancy, France}
\affil[1]{CEA LIST - \it Paris-Saclay, France}
\date{}

\maketitle

\smallskip

\begin{abstract}
    Code obfuscation is a major tool for protecting software intellectual property from attacks such as 
    reverse engineering or code tampering. 
    Yet, recently proposed (automated) attacks based on 
    Dynamic Symbolic Execution (DSE) shows very promising results, 
    hence threatening software integrity. 
    Current defenses are not fully satisfactory, being either 
    not efficient against symbolic reasoning, or affecting runtime performance
    too much, 
    or being too easy to spot. 
    We present and study a new class of anti-DSE protections coined as path-oriented protections
    targeting the weakest spot of DSE, namely path exploration.
    We propose a lightweight, efficient, resistant and analytically proved class of obfuscation 
    algorithms designed to hinder DSE-based attacks. Extensive evaluation  
    demonstrates that these approaches critically counter symbolic
    deobfuscation while yielding only a very slight overhead. 
\end{abstract}

\section{Introduction}

\paragraph{Context}
 Reverse engineering and code tampering are widely used to extract proprietary
assets (e.g., algorithms or cryptographic keys) or bypass security checks from software. 
{\it Code protection} techniques precisely seek to prevent, or at least make difficult, such {\it man-at-the-end} attacks, 
where the attacker has total control of the environment running the software under attack.  
Obfuscation \cite{Collberg97ataxonomy,Collberg:2009:SSO:1594894} aims at hiding a program's behavior by transforming 
its executable code in such a way that the behavior is conserved  but the program becomes much harder to understand. 

Even though obfuscation techniques are quite resilient against basic automatic reverse
engineering (including static attacks, e.g.~disassembly,  and  dynamic attacks, e.g.~monitoring), code analysis improves quickly \cite{Schrittwieser:2016}. 
Recent attacks based on
\textit{Dynamic Symbolic Execution} (DSE, a.k.a.~\textit{concolic execution})
\cite{CadarS13,GodefroidLM12,SchwartzAB10} use logical formulas to represent input
constraints along an execution path, and then automatically solve these
constraints to discover new execution paths. DSE appears to be very efficient  
against existing obfuscations
\cite{BanescuCGNP16,DBLP:conf/sp/BardinDM17,SalwanBarPot18,YadegariJWD15,CooganLD11},
combining the best of dynamic and semantic analysis.

\paragraph{Problem} \textit{The current state of symbolic deobfuscation is actually pretty unclear}.  
Dedicated protections have been proposed, mainly based on hard-to-solve
predicates, like Mixed Boolean Arithmetic formulas (MBA)
\cite{ZhouMGJ07} or cryptographic hash functions \cite{SharifLGL08}. Yet the
effect of complexified constraints on automatic solvers is hard to predict
\cite{BanescuCP17}, while cryptographic hash functions are easy to spot, may
induce significant overhead and are amenable to key extraction
attacks (possibly by DSE).

 On the other hand, DSE has been fruitfully applied on malware and legit codes
 protected by state-of-the-art tools and methods, including virtualization,
 self-modification, hashing or MBA
 \cite{YadegariJWD15,SalwanBarPot18,DBLP:conf/sp/BardinDM17}.
 The recent systematic experimental evaluation of symbolic deobfuscation by
 Banescu et al. \cite{BanescuCGNP16} shows that most standard obfuscation
 techniques do not seriously impact DSE. % -- including MBA.
 Only nested virtualization seems to provide a good
 protection, assuming the defender is ready to pay a high cost in terms of
 runtime and code size \cite{SalwanBarPot18}.  

\paragraph{Goals and Challenges} We want to propose a new class of dedicated anti-DSE 
obfuscation techniques to render automated attacks based on
symbolic execution inefficient. These techniques should be {\it strong} -- making 
DSE intractable in practice,
and {\it lightweight} -- with very low overhead in both code size and runtime performance. 
While most anti-DSE defenses try to break the 
symbolic reasoning part of DSE (constraint solver), we instead target 
its real weak spot, namely path exploration. Banescu et al.~\cite{BanescuCGNP16} 
present one such specific obfuscation scheme but with a large space overhead and no 
experimental evaluation. We aim at  proposing a general framework to understand 
such obfuscations and to define new  schemes  {\it both strong and lightweight}. 

\paragraph{Contribution} We study \emph{path-oriented} protections, a class of
protections seeking to hinder DSE by substantially increasing the number of
feasible paths within a program. 

\begin{itemize}[wide]
    \item We detail a formal framework describing \emph{path-oriented} protections
        (\cref{sec:approach}). 
        We characterize their desirable properties  --- namely {\it tractability},
        {\it strength}, and the key criterion of  {\it single value path} (SVP).  
The framework is  {\it predictive}, in the sense that our classification is
confirmed by experimental evaluation (\cref{sec:xp-eval}), allowing both to 
shed new light on the few existing path-oriented protections and to provide 
guidelines to design  new ones. 
        In particular, no existing protection~\cite{BanescuCGNP16}   achieves both tractability
        and optimal strength (SVP).  As a remedy, we propose {\it the  first two obfuscation schemes}  
         achieving both {\it tractability and optimal strength} (\cref{subsec:new-patterns}). 

    \item We highlight the importance of the {\it anchorage policy}, 
        i.e.~the way to choose where to insert protection in the code, in terms of 
          protection efficiency and robustness. Especially, we identify a way to achieve 
          {\it optimal composition} of path-oriented protections  (\cref{sec:optimal-composition}),  
          and  to completely prevent taint-based and slic-based attacks (two powerful code-level 
          attacks against obfuscation), coined as {\it resistance by design} (\cref{subsec:resbydesign}).  

     \item  We conduct extensive experiments (\cref{sec:secret-finding}) with two different attack scenarios --- 
        \textit{exhaustive path coverage} (\cref{sec:path-exploration}) and 
        \textit{secret finding}. 
        Results confirm that path-oriented protections are much stronger against DSE attacks than standard  protections (including nested virtualization) for only a slight overhead.  
        Moreover, while existing techniques \cite{BanescuCGNP16} can still be weak in some scenarios (e.g., secret finding),   
        our {\it new optimal schemes cripple 
        symbolic deobfuscation at essentially no cost in any setting}. 
Finally, experiments against slice, pattern-matching and taint attacks confirm the quality of our robust-by-design mechanism.    

\end{itemize}

\noindent As a practical outcome, we propose a  new {\it hardened deobfuscation
  benchmark} (\cref{sec:application:hardened}),  %, %which we show to be 
currently out-of-reach of symbolic engines, in order to 
extend existing obfuscation  benchmarks
\cite{tigresschallenge,BanescuCGNP16,SalwanBarPot18}.     

\paragraph{Discussion}  
We study a powerful class of protections against symbolic deobfuscation,
based on a careful analysis of DSE --   we target its weakest point (path
exploration) when other dedicated methods usually aim at its strongest point
(constraint solving and ever-evolving SMT solvers). We propose a predictive framework allowing 
to understand these protections, as well as several concrete protections 
impacting DSE more than three levels of virtualization at essentially no cost. We expect them  
to  be also  efficient against other semantic 
attacks 
\cite{B09,DBLP:conf/popl/HenzingerJMS02}
  (cf.~\cref{sec:discussion}).  
From a methodological point of view, this work extends recent attempts at
rigorous evaluation of obfuscation methods. We 
provide both an analytical evaluation, as Bruni et al. \cite{BruniGG18} for anti-abstract model checking,  
and a refinement of the experimental setup
initiated by Banescu et al. \cite{BanescuCGNP16}.

\section{Motivation}

\subsection{Attacker model}\label{sec:attacker-model}

\paragraph{Goal} 
We consider man-at-the-end scenarios where the attacker has full access to a potentially protected code under attack.
The attacker has just the binary code and no access to the source code.
The attacker model and the methodology follows closely the survey by Schrittwieser et al.~\cite{Schrittwieser:2016}. 
In order to be more concrete, we will focus on the following (intermediate) goals:
\begin{enumerate*}
    \item \label{it:epe} \textit{Exhaustive Path Exploration}. Covering every feasible path
        of the binary allows the attacker to retrieve a consistent Control Flow Graph and
        understand what the original program performs.
   \item \label{it:secret} \textit{Secret Finding}. Focusing on a specific part
        of the code (\textit{e.g.} license checks) and try to understand or retrieve a secret (\textit{e.g.} a key).
\end{enumerate*}

\paragraph{Capacity} we assume an {\it all-powerful symbolic adversary}, that is, this adversary can run a
correct and {\it complete} Dynamic Symbolic Execution (DSE). In practice, symbolic
engines are correct --- every path discovered is actually feasible --- but
incomplete --- they can be tricked into missing feasible paths
\cite{YadegariD15} or the underlying solver may timeout. 
This adversary can  also perform additional code-analysis based attack steps, such as {\it slicing}-
\cite{SrinivasanR16}, pattern-matching  or {\it tainting}-based \cite{SchwartzAB10} code simplifications.

\paragraph{Caveat} In the remains, we always consider this attacker model. There is one caveat that it is worth mentioning now: 
\textit{ part of our experimental evaluations are done from  source codes, i.e.
in our experiments the attacker has sometimes the source code. 
The reason is that state-of-the-art source-level DSE tools are much more efficient than binary-level ones,  
and that there is no good state-of-the-art tools to perform slice or taint attacks on binary codes.  
\textbf{Our experimental conditions are  much more in favor of the attacker}, 
and as a result they show that our approach is all the more effective.  
}

\begin{figure}[!htbp]
  \centering
        \begin{lstlisting}[backgroundcolor=\color{lightgrey}]
int check_char_0(char chr){ 
  char ch = chr;
  ch ^= 97;
  return (ch == 31);
}

/* ... 9 other checks ...  */

int check_char_10(char chr){ /* ... */ }

int check(char* buf) { 
  int retval = 1;
  retval *= check_char_0(buf[0]);
  /* ... check buf[1] to buf[9] ... */
  retval *= check_char_10(buf[10]);
  return retval;
}

int main(int argc, char** argv) {
  char* buf = argv[1];
  if (check(buf)) puts("win");
  else puts("lose");
}
\end{lstlisting}
  \tasseup
  \caption{Manticore crackme code structure}  \label{fig:cm-struct}
  \end{figure}

\subsection{Motivating example}
\label{sec:motivating-example}

Let us illustrate  anti-symbolic path-oriented protections on a toy crackme
program\footnote{\url{https://github.com/trailofbits/manticore}}. \cref{fig:cm-struct}
displays a skeleton of its source code. 
\lstinline{main} calls \lstinline{check} to verify 
each character of the 11 bytes input. It then outputs
\lstinline{"win"} for a correct guess, \lstinline{"lose"} otherwise.
Each subfunction \lstinline{check_char_i}$_{i \in [0, 10]}$ hides a secret
character value behind bitwise transformations, like {\sf xor} or {\sf shift}. 
{\it Such a challenge can be easily solved, completely automatically, by
  symbolic execution tools.
  \klee~\cite{DBLP:conf/osdi/CadarDE08} needs 0.03s  (on C
  code) and \binsec~\cite{DBLP:conf/wcre/DavidBTMFPM16} 0.3s (on
  binary code) to both find a winning input and explore all paths.  }

\paragraph{Standard protections} 
Let us now protect the program with standard obfuscations to measure their impact
on symbolic deobfuscation.  We will rely on Tigress~\cite{CollbergMMN12},
a widely used tool for systematic evaluation of deobfuscation methods
\cite{DBLP:conf/sp/BardinDM17,SalwanBarPot18,BanescuCGNP16}, to apply
 (nested) virtualization, a most effective obfuscation \cite{BanescuCGNP16}.
Yet, \cref{tab:motiv} clearly shows that virtualization does not prevent \klee\ from
finding the winning output, though it can thwart path exploration -- but with a
high runtime overhead (40$\times$).

\paragraph{The case for (new) path-oriented protections}
To defend against
symbolic attackers, we thus need  better anti-DSE 
obfuscation: \emph{path-oriented protections}. 
Such protections aim at exponentially increasing the number of paths 
that a DSE-based deobfuscation tool, like \klee, must explore. Two such
protections are \obsplit and \obfor, illustrated in \cref{fig:obf-eg} on function 
\lstinline{check_char_0} of the  example.

\begin{figure}[!htbp]
  \centering
  \begin{minipage}[t]{0.48\linewidth}
    \centering 
       \large\textbf{\obfor}
            \begin{lstlisting}[backgroundcolor=\color{lightgrey}]
int func(char chr){


  char ch = 0;
  <@\textcolor{darkred}{for (int i=0; i$<$chr; i++)}@>
    <@\textcolor{darkred}{ch++;}@>
  ch ^= 97;
  return (ch == 31);
}
\end{lstlisting}
        \end{minipage}
        \begin{minipage}[t]{0.5\linewidth}
        \centering  
        \large\textbf{\obsplit}
            \begin{lstlisting}[backgroundcolor=\color{lightgrey},breaklines]
int func(char chr,ch1,ch2) {
// new input char ch1 and ch2
  char garb = 0 // junk
  char ch = chr;
  <@\textcolor{darkred}{if (ch1 $>$ 60)}@> 
    <@\textcolor{darkred}{garb++;}@>
  <@\textcolor{darkred}{else}@> 
    <@\textcolor{darkred}{garb$--$;}@>
  <@\textcolor{darkred}{if (ch2 $>$ 20)}@> 
    <@\textcolor{darkred}{garb++;}@>
  <@\textcolor{darkred}{else}@>
    <@\textcolor{darkred}{garb$--$;}@>
  ch ^= 97;          
  return (ch == 31);
}
\end{lstlisting}
      \end{minipage}
      \tasseup
      \caption{Unoptimized obfuscation of \lstinline{check_char_0}}
            \label{fig:obf-eg} 
\end{figure}

{ \it For the sake of simplicity, the protections are implemented in a naive
  form, sensitive to slicing  or compiler
  optimizations. Robustness is discussed afterwards.}  % and evaluated in \cref{sec:threats}.}
In a nutshell, \obsplit --- an instance of \obbanescu \cite{BanescuCGNP16} ---
adds a number $k$ of conditional statements depending on new fresh inputs,
increasing the number of paths to explore by a factor of $2^k$.  Also, in {\it this implementation}   
we use a junk variable
\lstinline{garb} and two additional inputs \lstinline{ch1} and \lstinline{ch2}  unrelated to the original code. 
The {\it novel} obfuscation \obfor (\cref{subsec:new-patterns})   adds $k$ 
loops whose upper bound depends on distinct input
 bytes and which recompute a value that will be used later, expanding the number of paths to explore 
 by a factor of  $2^{8 \cdot k}$ -- assuming a 8-bit \lstinline{char} type. {\it This implementation} 
does not introduce any junk variable nor additional input.  
In both cases, the obfuscated code relies on the input, 
forcing DSE to explore {\it a priori} all paths. 
\cref{tab:motiv} summarizes the performance of \obsplit and  \obfor.
\begin{enumerate*}
\item[]  Both \obsplit and \obfor do not induce any  overhead, 
\item[]  \obsplit is highly efficient (timeout) against coverage but not against secret finding,  while
        \obfor is highly efficient for both. 
\item[]  \obfor($k=2$) performs already better than \obsplit($k=19$) and
  further experiments (\cref{sec:xp-eval}) shows \obfor to be a much more effective path protection than \obsplit. 
\end{enumerate*}

\begin{table}[!htbp]
    \caption{DSE Attack on the Crackme Example (\klee)} 
    \label{tab:motiv}
    \centering
    \resizebox{\columnwidth}{!}{%
        \begin{tabular}{|c|c|l||C{1.3cm}|C{1cm}||R{1.2cm}|}
        \cline{2-6}
        \multicolumn{1}{c|}{}
        & \multicolumn{2}{c||}{\multirow{3}{*}{\textbf{Obfuscation type}}}
        & \multicolumn{2}{C{2.3cm}||}{\textbf{Slowdown} \newline \itshape Symbolic Execution}
        & \textbf{Overhead} \\
       \cline{4-5}
       \multicolumn{1}{c|}{} & \multicolumn{2}{c||}{} & Coverage & Secret & runtime\\
       \hline
       \multirow{2}{*}{\textbf{Standard}}
       & \multicolumn{2}{l||}{Virt          } & \verybaddef/ & \verybaddef/ & $\times1.1$ \\
       & \multicolumn{2}{l||}{Virt $\times2$} & \baddef/ & \verybaddef/ & $\times1.3$ \\
       & \multicolumn{2}{l||}{Virt $\times3$} & \gooddef/ & \baddef/ & $\times40$ \\
       \hline\hline
       \multirow{6}{*}{\textbf{Path-Oriented}} & & $k=11$ & \verybaddef/ & \verybaddef/ & $\times1.0$ \\
        & SPLIT & $k=15$ & \gooddef/ & \verybaddef/ & $\times1.0$ \\ 
        & \cite{BanescuCGNP16} & $ k=19$ & \gooddef/ & \verybaddef/ & $\times1.0$ \\ 
        \cline{2-6}
        &  & $k=1$ & \gooddef/ & \baddef/ & $\times1.0$ \\ 
        & FOR  &  $k=2$ & \gooddef/ & \gooddef/ & $\times1.0$ \\ 
        & (new) &  $k=3$ & \gooddef/ &  \gooddef/ & $\times1.0$ \\ 
        \hline
    \end{tabular}}

\smallskip
  
    \verybaddef/  t $\leq$ 1s
    \hfill
    \baddef/ 30s $<$ t $<$ 5min
    \hfill
    \gooddef/ time out ($\ge$ 1h30)

    \smallskip
    Unobfuscated case: \klee succeeds in 0.03s
\end{table}

\smallskip

\noindent\textit{\textbf{Question:} How to distinguish a priori between mildly 
effective and very strong path-oriented protections?} 

\smallskip

Note that \textsf{gcc -Ofast} is able to remove this simple \obsplit, as it is not related to the output {\it (slicing attack)}. The basic \obfor resists such attack, 
but  \textsf{clang -Ofast} is able to remove it  by
an analysis akin to a {\it pattern attack}. However, a sightly modified \obfor\ 
(\cref{fig:enhanced-for}) overcomes such optimizations. 

\begin{figure}[!htbp]
  \centering
\begin{lstlisting}[backgroundcolor=\color{lightgrey}]
int func(char chr) {
  int ch = 0; // prevent char overflows 
  <@\textcolor{darkred}{for (int i=0; i$<$ (int) chr; i++)\{}@>
    <@\textcolor{darkred}{if (i \% 2 == 0) ch += 3;}@>
    <@\textcolor{darkred}{if (i \% 2 != 0) ch$--$;}@>
  <@\textcolor{darkred}{\}}@>
  <@\textcolor{darkred}{if (i \% 2 != 0) ch -= 2;}@> // adjust for odd values
  ch ^= 97;
  return (ch == 31);
}
\end{lstlisting}
  \tasseup
  \caption{Enhanced \obfor\ -- \lstinline{check_char_0}}
  \label{fig:enhanced-for}
\end{figure}

\noindent\textit{\textbf{Question:} How to protect path-oriented protections against code analysis-based attacks (slicing, tainting, patterns)?} 

\smallskip

\noindent {\textbf{The goal of this paper} is to define, analyze and explore in a systematic
  way the potential of path-oriented transformations as anti-DSE
  protections. We  define a  {\it predictive} framework (\cref{sec:approach})   
  and propose several new {\it concrete protections} (\cref{sec:strong-schemes-forking-points}).
  In particular, our framework allows to precisely explain why 
 \obfor is experimentally  better than \obsplit. 
We also discuss how path-oriented protections can be made 
 resistant to several types of attacks (\cref{sec:anchorage-policy,sec:threats}). 
}

\section{Background}
\label{sec:background}

\paragraph{Obfuscation}
\label{sec:obfuscation}
Obfuscation \cite{Collberg97ataxonomy} aims at hiding a
program's behavior or protecting proprietary information such as algorithms or
cryptographic keys by transforming the program to protect $\mathcal{P}$ 
into a program $\obs{\mathcal{P}}$ such that $\obs{\mathcal{P}}$ and $\mathcal{P}$ are semantically
equivalent, $\obs{\mathcal{P}}$ is roughly as efficient as $\mathcal{P}$ and $\obs{\mathcal{P}}$ is {\it harder to understand}. 
While 
 it is still unknown whether applicable theoretical criteria of obfuscation 
exist \cite{BarakGIRSVY01}, practical obfuscation techniques and tools 
do. 

Let us touch briefly on three such important techniques. 

\begin{itemize}
    \item \textit{Mixed Boolean-Arithmetic}~\cite{EyrollesGV16,ZhouMGJ07} transforms an arithmetic
and/or Boolean equation into another using a combination of Boolean and
arithmetic operands with the goal to be more complex to understand and more 
difficult to solve by SMT solvers~\cite{VanegueH12,Barrett2018}. 
    \item \textit{Virtualization} and \textit{Flattening} \cite{Wang2000} 
transform  the control flow into an interpreter loop  dispatching every
instruction. {\em Virtualization} even adds a virtual machine interpreter for 
a custom bytecode program encoding the original program semantic. 
Consequently, the visible control flow of the protected program 
is very far from the original control flow. Virtualization can  be {\it nested}, 
encoding the virtual machine itself into another virtual machine.  
    \item \emph{Self-modifying code} and \emph{Packing} insert instructions that
dynamically modify the flow of executed instructions. These techniques seriously  damage  
static analyses by hiding the real instructions. However, extracting the hidden code
can be done by dynamic approaches \cite{DebrayP10,Kang07}, including DSE \cite{YadegariJWD15}. 

\end{itemize}

\paragraph{Dynamic Symbolic Execution (DSE)}
\label{sec:dse}
Symbolic execution \cite{CadarS13} simulates the execution of a program along its
paths, systematically generating inputs for each new discovered branch
condition. This exploration process consider inputs as \textit{symbolic
  variables} whose value is not fixed. The symbolic execution engine follows a
path and each time a conditional statement involving the input is encountered,
it adds a
constraint to the {\it symbolic value} related to this input. Solving 
the constraints automatically (typically with off-the-shelf SMT solvers
\cite{VanegueH12}) then allows to generate {\it new input
values leading to new paths}, progressively covering all paths of the program
-- up to a user-defined bound. The technique has seen strong renewed interest
in the last decade to become a prominent  bug finding technique  \cite{CadarS13,GodefroidLM12,DBLP:conf/sp/ChaARB12}.

When the symbolic engine cannot perfectly  handle some constructs of the
underlying programming language --- like system calls or self-modification
--- the symbolic reasoning is interleaved with a {\it dynamic analysis} allowing
meaningful (and fruitful) approximations -- \textit{Dynamic Symbolic Execution} 
\cite{GodefroidLM12}.  Typically, (concrete) runtime values are used
to complete missing part of path constraints that are then fed to the solver
through \textit{concretization} \cite{DBLP:conf/issta/DavidBFMPTM16}.  This
feature makes the approach especially robust against complicated constructs  
found in obfuscated binary codes, typically packing or self-modification, 
making DSE a strong candidate for automated  
deobfuscation -- {\it symbolic deobfuscation}:  it is as robust as dynamic 
analysis, with the additional ability to infer {\it trigger-based conditions}.  

\section{A framework for path-oriented protections}
\label{sec:approach}

This section presents a framework to evaluate \emph{path-oriented} obfuscations, i.e.~protections  
aiming  at hindering symbolic deobfuscation by taking advantage of path explosion. 

\begin{figure}[!htbp]
    \centering
        \tikzstyle{block} = [draw, rectangle, minimum height=3em]
        \tikzstyle{input} = [coordinate]
        \tikzstyle{output} = [coordinate]
        \tikzstyle{inter} = [coordinate]

        \resizebox{\columnwidth}{!}{% 
        \begin{tikzpicture}[auto,node distance=2cm]
            \node [input, name=input] {};
            \node [inter, name=n1, right of=input] {};
            \node [block,right of=n1] (placement) {Find placement};
            \node [inter,name=n3,right of=placement] {};
            \node [block,right of=n3] (forking) {Add forking point};
            \node [inter,name=n2, right of=forking] {};
            \node [output,name=output,right of=n2] {};

            \draw (input) -- node {\textit{code}} (n1);
            \draw [->] (n1) -- (placement);
            \draw [->] (placement) -- (forking);
            \draw (forking) -- (n2);
            \draw [->] (forking) -- node[near end] {\textit{protected code}} (output);

            \draw [color=gray](2,1) rectangle (10,-1.5);
            \node at (10,-1.3) [left]{\textit{obfuscation scheme}};

            \node [block, above of=placement] (anchor) {Anchorage policy};
            \node [block, above of=forking] (class) {Class(es) of forking points};

            \draw [vecArrow] (anchor) -- (placement);
            \draw [vecArrow] (class) -- (forking);
        \end{tikzpicture}}
        \caption{Path-Oriented Obfuscation Framework}
        \label{fig:framework}
\end{figure}

\subsection{Basic definitions}
\label{sec:forking-points}

This paper deals with a specific kind of protections targeting DSE:
\textit{path-oriented} protections. Transforming
a program
$\mathcal{P}$ into $\mathcal{P}'$ using \textit{path-oriented} protections ensures that
$\mathcal{P}'$ is functionally equivalent to $\mathcal{P}$ and aims to
guarantee $\#\Pi' \gg \#\Pi$,  meaning
the number of paths in $\mathcal{P}'$ is much greater that the ones in $\mathcal{P}$. 

The most basic path-oriented protection consists in one \textit{forking point} inserted
in the original code of $\mathcal{P}$.

\begin{defi}[Forking Point]
    A forking point $\mathcal{F}$ is a location in the code  
    that creates at most $\gamma$ new  paths. 
    $\mathcal{F}$ is defined by: 
    an address $a$, a variable $x$ and a capacity $\gamma$.
    It is written $\mathcal{F}(a,x,\gamma)$
\end{defi}

To illustrate this definition, see the snippet of \obsplit  in Figure~\ref{fig:obf-eg}:   
both if-statements define each a forking point of capacity $2$ based on the variable \texttt{ch1} and  \texttt{ch2} respectively.  

\smallskip 

Now,  to obtain a complete path-oriented obfuscation $\mathcal{P}'$ of a program $\mathcal{P}$,
we need to insert $n$ forking points throughout the code of $\mathcal{P}$, hence the notion of 
\textit{obfuscation scheme}  (\cref{fig:framework}). 

\begin{defi}[Obfuscation scheme]
A
\emph{(path-oriented protection) obfuscation scheme} is a function $f(\mathcal{P},n)$ that, for every program
$\mathcal{P}$, inserts $n$ forking points in $\mathcal{P}$. %  following some predefined rules. 
It comprises a set of forking points and an {\it anchorage policy}, \textit{i.e.} the
placement method of the forking points.
\end{defi}

\subsection{Desirable obfuscation scheme properties} 
\label{sec:desir-obfusc-scheme}

An ideal obfuscation scheme is both
strong (high cost for the attacker) and cheap (low cost for the defender).  Let
us define these properties more precisely.

\smallskip

The {\bf strength} of an obfuscation scheme is intuitively the expected increase
of the search space for the attacker.
Given an obfuscation scheme $f$, it is defined as
 $\Gamma_f(\mathcal{P},n)= \#\Pi_{f(\mathcal{P},n)}$,
for a program $\mathcal{P}$ and  $n$ forking points to insert. 

\smallskip 

The {\bf cost} is intuitively the {\it maximal} runtime overhead the defender should worry about.  
Given an obfuscation scheme $f$,
 cost is defined by $\Omega_f(\mathcal{P},n)$,  as the {\it maximum} trace size of the obfuscated program $f(\mathcal{P},n)$.
 Formally,  $\Omega_f(\mathcal{P},n)=max_i\{|\tau'_i|\}$ where
 $\{\tau'_i\}$ is the set of execution traces of $f(\mathcal{P},n)$ and $|\tau'_i|$ is the size of the trace.

\smallskip

We seek strong tractable obfuscations, i.e., yielding enough added paths 
 to get a substantial slowdown, with a low runtime overhead. 

\begin{defi}[Strong scheme]
    An obfuscation scheme $f$ is {\em strong} if
    for any program $\mathcal{P}$, we have $\Gamma_f(\mathcal{P},n) \geq 2^{O(n)} \cdot  \#\Pi_{\mathcal{P}}$,
    where $\Pi_{\mathcal{P}}$ is the set of paths of $\mathcal{P}$.
    Putting things quickly, it means that the number of paths to explore is multiplied  by $2^n$.
\end{defi}  

\begin{defi}[Tractable scheme]
    An obfuscation scheme $f$ is {\em tractable} if 
for any program $\mathcal{P}$, $\Omega_f(\mathcal{P},n) \leq max_i\{|\tau_i|\} + O(n)$, 
    where $max_i\{|\tau_i|\}$ is the size of the longest trace of $\mathcal{P}$.
    In other words, it is tractable only if the overhead runtime is linear on $n$.
\end{defi}

\paragraph{Combining schemes}  Scheme composition preserves tractability 
(the definition involves an upper bound) but not necessarily strength
(the definition involves a lower bound). Hence, we need 
{\it optimal composition rules} (\cref{sec:optimal-composition}).  

\subsection{Building stronger schemes}
\label{sec:build-strong-schem}

Strong path-oriented protections, can rely on composition but we saw
in \cref{sec:desir-obfusc-scheme} that it is not straightforward.
But since path-oriented protections first lean on forking execution into
many paths, we should also investigate whether some forking points are better that others.
The best case is to insert $k$ forking points $(\mathcal{F}(a_i,x_i,\gamma_i))_i$
such that it would ensure that each path created by a forking point 
$(\mathcal{F}(a_i,x_i,\gamma_i))_i$ corresponds to only one possible value of the variable $x_i$. 
This leads us to define this type of forking point as a \textit{Single Value Path} (SVP) protection. 

\begin{defi}[Single Value Path]
    \label{def:svp}
    A forking point based on  variable $x$ is \emph{Single Value Path} (SVP)
    if and only if $x$ has only one possible value in each path created by the protection.
\end{defi}

A SVP forking point creates a new path for each possible value of variable
$x$   (\textit{e.g.}, i.e., $2^{32}$ new paths 
are created for an unconstrained C \lstinline{int} variable). 
For example, the \obfor obfuscation shown in Figure~\ref{fig:obf-eg} produces $2^{32}$ paths,
that should be a priori explored since it depends on an input variable.  
SVP forking points is key to ensure that DSE will need to enumerate all possible 
input values of the
program under analysis (thus {\it boiling down to brute forcing}) -- see
~\cref{sec:anchorage-policy}. 

\section{Concrete Path-oriented protections}
\label{sec:strong-schemes-forking-points}

This section   reviews existing path-oriented protection schemes within the 
framework of \cref{sec:approach}, but also details new such schemes achieving both 
tractability and optimal strength (SVP).

\begin{figure}[!htbp]
    \centering
    \lstset{breaklines,style=MyCustomC,backgroundcolor=\color{lightgrey}}
        \begin{lstlisting}
int main (int argc, char** argv){
  char* input = argv[1];

  char chr = *input; // inserted by obfuscation

  switch (chr) {  // inserted by obfuscation
    case 1:  ...  // original code
      break; 
    case 2:  // obfuscated version of case 1
      break;
    ... 
    default: // another obfuscated version of case 1 
      break;
  }       
  return (*input >= 100);
}
    \end{lstlisting}
        \caption{\obbanescu obfuscation}
    \label{fig:banescu-obf}
\end{figure}

\paragraph{\obbanescu \cite{BanescuCGNP16}} \obbanescu is an anti-symbolic execution 
obfuscation proposed by Banescu et al..
Branch conditions are inserted in basic blocks to divide the input value range
into multiple sets. The code inside each branch of the conditional statement is
an obfuscated version of the original code.
 We distinguish two cases, depending on whether the  branch condition uses a  \lstinline{switch} or a 
\lstinline{if} statement.
{\it In the remaining part of this paper, \obsplit\ will denote the \obbanescu obfuscation with \lstinline{if} statement, and
 \obbanescu  the \obbanescu obfuscation with \lstinline{switch} statement. 
}

The \obbanescu (\lstinline{switch}) scheme introduces an exhaustive \lstinline{switch}
statement over all possible values of a given variable -- see example in \cref{fig:banescu-obf},  
thus yielding  $2^S$ extra-paths, with $S$ the bit size of the variable. 
This scheme enjoys the SVP property as  in each branch of the \lstinline{switch} the target variable can have only one value, and  
it  is also tractable in time provided the \lstinline{switch} is efficiently
compiled into a binary search tree or a jump table, as usual. 
Yet, while not pointed out by Banescu et al., {\it this scheme is not tractable
  in space} (code size) as it leads to {\it huge} amount of code duplication -- the
byte case may be fine, but not above.

\begin{figure}[!htbp]
    \centering
    \lstset{breaklines,style=MyCustomC,backgroundcolor=\color{lightgrey}}
        \begin{lstlisting}
int main (int argc, char** argv){
  char* input = argv[1];

  char chr = *input; // inserted by obfuscation

  if (chr < 30) {  // inserted by obfuscation
      ...          // original code <@$\mathcal{O}$@> 
   }  
   else  ...      // obfuscated version of <@$\mathcal{O}$@>

  return (*input >= 100);
}
    \end{lstlisting}
        \caption{\obsplit obfuscation}
    \label{fig:banescu-obsplit}
\end{figure}

\paragraph{\obsplit \cite{BanescuCGNP16}}  This transformation (\cref{fig:banescu-obsplit}) is similar to
\obbanescu, but the control-flow is split by a condition triggered by a variable. 
This protection is tractable in both time (only one additional condition check per forking point) 
and space (only one block duplication per forking point). Yet, the protection is not SVP.     

\label{subsec:new-patterns}

\paragraph{\obfor (new)} The \obfor scheme (\cref{fig:obf-eg})  replaces assignments \lstinline{ch := chr} by loops  
\lstinline{ch = 0; for (i = 0; i <= chr; i++) ch++;} where \lstinline{chr} is an input-dependent variable.   
Intuitively, such \lstinline{for} loops can be unrolled $n$ times, for any value $n$ that \lstinline{chr} can take at runtime. 
Hence, a loop controlled by a variable 
defined over a bit size $S$   generates up to $2^{S}$ extra-paths, with additional path length of 
   $2^S$. While the achieved protection is excellent, it is {\it intractable}  when 
$S=32$ or $S=64$.  
{\bf The  trick}  is to restrict this scheme to byte-size variables, and then chain such forking points on each byte of the 
variable of interest.  Indeed, \obfor over a byte-size variable generates up to $2^8$ additional paths with an additional path length 
at most of  $2^8$. More informaion about this is given in \cref{sec:annexe-write}. 

Chaining $k$ forking points such \obfor loops  leads up  
to ${2^{8k}}$ extra-paths with an extra-length of only $k \cdot 2^8$, {\it keeping strength while making runtime overhead reasonable}. 
(More precisely with a constant time overhead wrt inputs.)

\paragraph{\obwrite (new)} The \obwrite obfuscation adds self-modification
operations to the code. It replaces an assignment \lstinline{a := input} with a non
input-dependent operation \lstinline{a := k} (with \lstinline{k} an arbitrary constant value) 
and replaces at runtime this instruction by \lstinline{a := i} where \lstinline{i} is the runtime value of \lstinline{input}  
(self-modification). This is illustrated in \cref{fig:wobf}, where the offset move at label \lstinline{L1}
actually rewrites the constant \lstinline{0} at \lstinline{L2} to the value contained at the address of the
input. More detailed information about \obwrite can be found in \cref{sec:annexe-write}.

\begin{figure}[!htbp]
    \centering
    \begin{minipage}[]{0.4\linewidth}
\begin{lstlisting}[backgroundcolor=\color{lightgrey},basicstyle=\scriptsize]

L: mov [a], [input]
\end{lstlisting}
    \end{minipage}~
    \large $\Rightarrow$~
    \begin{minipage}{0.47\linewidth}
\begin{lstlisting}[backgroundcolor=\color{lightgrey},basicstyle=\scriptsize]
<@$~$@> L1: mov L2+off,[input]
<@$~$@> L2: mov [a], 0
\end{lstlisting} 
    \end{minipage}
    \caption{\obwrite obfuscation}
    \label{fig:wobf}
\end{figure}

Symbolic execution engines are not likely to relate \lstinline{a} and \lstinline{input},
thus thinking that \lstinline{a} is constant across all executions. If the dynamic part of the engine spots that
\lstinline{a} may have different values, it will  iterate over every possible values
of \lstinline{input}, creating new paths each time.  The scheme is SVP, and 
its overhead is negligible (2 additional instructions, independent of the bit size of the targeted variable as long as it can be handled 
natively by the underlying architecture). \obwrite has yet two drawbacks: it can be
spotted by a dynamic analysis and needs the section containing the code to be writable. 

\begin{table}[!htbp]
    \centering
    \caption{Classification of obfuscation schemes}
    \label{tab:obf-properties}

    \vspace{0.1cm}

    \resizebox{\columnwidth}{!}{%
    \begin{tabular}{|ll|c|c|c||c|}
        \cline{3-6}
        \multicolumn{1}{c}{}&& \multicolumn{2}{c|}{Tractable} & SVP & Stealth \\
        \multicolumn{1}{c}{}&& Time & Space & & \\
        \hline
        {\obbanescu \cite{BanescuCGNP16}} & \textsf{switch} & \gooddef/ & \baddef/ & \gooddef/ & \baddef/  \\
          {\obsplit \cite{BanescuCGNP16}} & \textsf{if}  & \gooddef/ & \gooddef/ & \baddef/ & \gooddef/ \\
        \hline
        \multirow{2}{*}{\obfor} & \textsf{word} & \baddef/ & \gooddef/ & \gooddef/ & \gooddef/ \\
        & \textsf{byte} & \gooddef/ & \gooddef/ & \gooddef/ & \gooddef/ \\
        \hline
        \multicolumn{2}{|l|}{\obwrite} & \gooddef/ & \gooddef/ & \gooddef/ & $\sim$ \\
        \hline
    \end{tabular}}
\end{table}

\paragraph{Summary} 
\label{sec:schemes:summary}
The properties of every scheme presented so far are summarized in \cref{tab:obf-properties} -- stealth is discussed in \cref{sec:stealth}.    
{\it Obfuscation schemes from the literature are not fully satisfactory}: \obbanescu  is 
space expensive  and easy to spot,   \obsplit is not strong enough  (not SVP).  
On the other hand, {\it the new schemes} \obfor (at byte-level) and \obwrite\  
{\it are both strong (they satisfy SVP) and tractable, making them perfect anti-DSE protections}.  

As a consequence, we suggest using variations of \obfor as the main protection
layer, with \obwrite deployed only when self-modification and unpacking are
already used (so that the scheme remains hidden). \obbanescu can be used {\it
  occasionally} but only on byte variables to mitigate space explosion. \obsplit
can add further diversity and code complexity, but it should not be the main
defense line.  All these protections must  be inserted in a {\it
  resistant-by-design} manner (\cref{subsec:resbydesign}) together with {\it
  diversity of implementation} (\cref{sec:attack:pattern}). 

\section{Anchorage policy}
\label{sec:anchorage-policy}

We need to ensure that inserting path-oriented protections into a program gives
real protection against DSE and will not be circumvented by attackers. 

\subsection{Optimal composition}
\label{sec:optimal-composition}

We show how to  combine the
 forking points in order to obtain  \emph{strong} obfuscation schemes. 
The issue with obfuscation scheme combination 
is that some forking points could hinder the efficiency of other 
forking points -- imagine a {\tt if($x \geq 100$)} split followed by a {\tt if($x \leq 10$)} split: 
we will end up with 3 {\it feasible} paths rather than the expected $2\times2 = 4$, 
as one of the path if infeasible ($x > 100 \wedge x \leq 10$).

Intuitively, we would like the forking points to be independent from each other, in the sense 
that their efficiency combine perfectly 

\begin{defi}[Independence]
Let us  consider a program $\mathcal{P}$ and $\sigma$ a path
of $\mathcal{P}$. We obfuscate this program alternatively with two forking points $\mathcal{F}_1$ and 
$\mathcal{F}_2$ such that $\sigma$ encounters both forking points. This results 
in three obfuscated programs: $\mathcal{P}_1$, $\mathcal{P}_2$ and $\mathcal{P}_{1,2}$.
We note $\#\sigma_1$ ({\it resp.} $\#\sigma_2$) the set of feasible paths 
created from $\sigma$ when encountering only $\mathcal{F}_1$ in $\mathcal{P}_1$ 
({\it resp.} $\mathcal{F}_2$ in $\mathcal{P}_2$) 
and $\#\sigma_{1,2}$ the set of feasible paths created from $\sigma$ when encountering
both $\mathcal{F}_1$ and $\mathcal{F}_2$ in $\mathcal{P}_{1,2}$. 
    $\mathcal{F}_1$ and $\mathcal{F}_2$ are independent over a program $\mathcal{P}$ 
    if for all path $\sigma$ passing through $\mathcal{F}_1$ and $\mathcal{F}_2$:
    $$\#\sigma_{1,2} = \#\sigma_1 \times \#\sigma_2$$
\end{defi}

An easy way to obtain forking point independence is to consider forking points 
built on independent variables -- variables are {\it independent} if their values 
are not computed from the same input values.  
Actually, if independent forking points are well placed in
the program, path-oriented protections ensure an exponential increase in the number of paths.

\begin{theo}[Optimal Composition]
    \label{theo:opti-comp}
    Suppose that $\mathcal{P}'$  is obtained by obfuscating the program $\mathcal{P}$. 
    If every original path of $\mathcal{P}$ goes through at least $k$ independent 
    forking points of $\mathcal{P}'$ inserting at least $\theta$ feasible paths, then
    $\#\Pi_{\mathcal{P}'} \geq \#\Pi_\mathcal{P} \cdot \theta ^k$
\end{theo}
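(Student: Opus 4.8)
The plan is to reduce the global count $\#\Pi_{\mathcal{P}'}$ to a sum of local counts, one per original path of $\mathcal{P}$, and then bound each local count from below by $\theta^k$ using the independence hypothesis. So the proof splits into a ``disjointness across original paths'' part and a ``multiplicativity along one original path'' part.

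First I would introduce a projection map $\pi : \Pi_{\mathcal{P}'} \to \Pi_{\mathcal{P}}$ sending every feasible path of the obfuscated program to the unique original path it refines. Because path-oriented protections only \emph{insert} forking points --- they never merge or delete the control flow already present --- and are semantics-preserving, erasing from an obfuscated trace the steps that belong to inserted forking points yields a well-defined feasible path of $\mathcal{P}$; this defines $\pi$ and partitions $\Pi_{\mathcal{P}'}$ into fibers $\pi^{-1}(\sigma)$ indexed by $\sigma \in \Pi_{\mathcal{P}}$. Hence $\#\Pi_{\mathcal{P}'} = \sum_{\sigma \in \Pi_{\mathcal{P}}} \#\pi^{-1}(\sigma)$, and it suffices to show $\#\pi^{-1}(\sigma) \geq \theta^k$ for every $\sigma$.

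Next, fix an original path $\sigma$. By hypothesis it traverses $k$ pairwise-independent forking points $\mathcal{F}_1,\dots,\mathcal{F}_k$, each inserting at least $\theta$ feasible paths. I would prove by induction on $j \in \{0,\dots,k\}$ that the set of feasible paths obtained from $\sigma$ after accounting for $\mathcal{F}_1,\dots,\mathcal{F}_j$ has size at least $\theta^j$. The base case $j=0$ is the single path $\sigma$. For the inductive step, each of the $\geq \theta^{j-1}$ feasible paths built so far still goes through $\mathcal{F}_j$ (the forking points lie on $\sigma$, and insertions do not remove control flow already on $\sigma$), so by the Independence definition its feasible continuations through $\mathcal{F}_j$ combine multiplicatively with the choices already made, contributing a factor of at least $\#\sigma_j \geq \theta$. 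Taking $j=k$ gives $\#\pi^{-1}(\sigma) \geq \theta^k$, and summing over the $\#\Pi_{\mathcal{P}}$ original paths yields $\#\Pi_{\mathcal{P}'} \geq \#\Pi_{\mathcal{P}} \cdot \theta^k$.

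The main obstacle is the inductive step: the Independence definition is stated only pairwise, for two forking points and a single path, whereas the argument needs the $k$ forking points to behave \emph{mutually} independently, i.e.\ the product formula must persist after several of them have already split $\sigma$. The clean way around this is to lean on the sufficient condition stated just before the theorem --- forking points built on variables not computed from common inputs --- which guarantees that the feasibility constraint added by $\mathcal{F}_j$ involves variables disjoint from those constrained by $\mathcal{F}_1,\dots,\mathcal{F}_{j-1}$, so no path is lost to an unsatisfiable conjunction and the counts multiply exactly. A secondary point requiring care is the well-definedness and fiber-partition property of $\pi$, which rests on the structural assumption that path-oriented obfuscation inserts forking points without collapsing distinct original paths.
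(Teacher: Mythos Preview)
Your proposal is correct and follows essentially the same route as the paper: decompose $\#\Pi_{\mathcal{P}'}$ as a sum over the original paths $\sigma$, lower-bound each summand by a product of the per-forking-point counts via independence, replace each factor by $\theta$, and sum. The paper's proof is terser---it writes $\#\Pi_{\mathcal{P}'}=\sum_{\sigma_i}\#\sigma_i$ without introducing your projection $\pi$, and it jumps directly to $\#\sigma_i\ge\prod_{j=1}^k\#\sigma_{i,j}$ ``according to the definition of independence'' without spelling out the induction---so your version is in fact more careful on both points. In particular, your observation that the Independence definition is only stated pairwise, and that one must appeal to the disjoint-input-variables sufficient condition to lift it to $k$ forking points, identifies a step the paper silently skips.
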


\begin{proof}
Let's consider a program $\mathcal{P}$ with $\#\Pi$ original paths $\sigma_i$, $i \in \{1\dots\#\Pi$\}.
We obfuscate $\mathcal{P}$ into $\mathcal{P'}$ with an obfuscation scheme adding $n$ {\em independent} 
forking points inserting $\#\sigma_{1..n}$ feasible paths.
The forking points  are placed such that every original path now contains at least $k$ forking points.

\begin{itemize}
   \item The total number of paths of $\mathcal{P'}$ is:
        $$\#\Pi_{\mathcal{P'}} = \sum_{\sigma_i}\#\sigma_i, \quad \sigma_i \in \Pi_\mathcal{P}$$
    \item According to the definition of independence, one original path $\sigma_i$ 
        with {\bf at least} k forking points inserting $\#\sigma_{i,\{1..k\}}$ feasible paths
        creates $\#\sigma_i \geq \prod_{j=1}^k\#\sigma_{i,j}$ new paths
    \item Then,
        $$\#\Pi_{\mathcal{P'}} \geq \sum_{\sigma_i}(\prod_{j=1}^k\#\sigma_{i,j}), 
        \quad \sigma_i \in \Pi_\mathcal{P}$$
        We write $\theta = min_{i,j}(\#\sigma_{i,j})$
        $$\#\Pi_{\mathcal{P'}} \geq \sum_{\sigma_i}(\theta^k), \quad \sigma_i \in \Pi_\mathcal{P}$$
        $$\#\Pi_{\mathcal{P'}} \geq \#\Pi_\mathcal{P}\times\theta^k$$

\end{itemize}

\end{proof}

By choosing enough independent SVP forking points (one for each input variable), we can even 
 ensure that DSE will have to enumerate over all possible input values of the program under analysis, 
hence performing as bad as mere  \emph{brute forcing}.    

\paragraph{Implementation} Ensuring that each path will go through at least $k$ forking points 
can be achieved by carefully selecting the points in the code where the forking points are
inserted: a control flow graph analysis provides information about where and how many forking
points are needed to cover all paths. The easiest way to impact all paths at once is to 
select points in the code that are not influenced by any conditional statement. 
Dataflow analysis can be used further in order to ensure that 
the selected variables do not share dependencies with the same input (independent variables).  

\subsection{Resistance-by-design to taint and slice}
\label{subsec:resbydesign}
\label{sec:resbydesign}

Taint analysis \cite{SchwartzAB10} and (backward) slicing \cite{SrinivasanR16} are two advanced code simplification 
methods built on the notion of {\it data flow relations} through 
a program.  
These data flow relations can be defined as \emph{Definition-Use} (Def-Use)
chains -- as used in compilers. Data are {\it defined} when variables are
assigned values or declared, and {\it used} in expressions.
Taint  (resp. Slice) uses Def-Use chains  
to replace input-independent by its constant effect (resp. remove code not
impacting the output).  
If there exists a Def-Use chain linking data $x$ to data $y$, we write:
$x$ \chain{} $y$. 
\emph{Relevant} variables are defined as having both a Def-Use
chain with an input and one with an output:
\begin{defi}[Relevant Variable]
   $x$ is \textbf{relevant} if there exists at least two Def-Use chains
   such that input \chain{} $x$ and $x$ \chain{} output.
\end{defi}
\smallskip 

A {\it sound} taint analysis ({\it resp.} slice analysis) marks {\it at least} all variables
(x,a) such that $input$ \chain{} $(x,a)$ ({\it resp.} $(x,a)$ \chain{} $output$). 
Unmarked variables are then safely removed (slicing) or set to their constant value (tainting).  
Thus, in order to  resist  by-design to such attacks, a  protection must 
rely on code that will be marked by both slicing and tainting.   

Here, we refine the definition of a forking point $\mathcal{F}$: it can be viewed 
as two parts, a guard $\mathcal{G}$ --- the condition --- and an action $\mathcal{A}$ 
--- the code in the statement. 
We denote by $Var(\mathcal{F})$ the set of variables in $\mathcal{G}$ and $\mathcal{A}$.  
We say that $\mathcal{F}$ is {\it built upon relevant variables} if all variables in $Var(\mathcal{F})$ are relevant. 

\begin{theo}[Resistance by design] Let us consider a program $P$ and a forking point 
    $\mathcal{F}$. Assuming $\mathcal{F}$ is built upon relevant variables, then $\mathcal{F}$  is slice and taint resistant.  
\end{theo}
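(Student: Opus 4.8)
The plan is to unfold the definitions and show that \emph{every} variable occurring in $\mathcal{F}$ is marked both by a sound taint analysis and by a sound slice analysis, so that neither transformation is allowed to alter the code of $\mathcal{F}$. Recall that ``$\mathcal{F}$ is slice and taint resistant'' means precisely that the guard $\mathcal{G}$ and the action $\mathcal{A}$ constituting $\mathcal{F}$ survive both simplifications, since taint-based simplification only replaces \emph{unmarked} variables by their constant value and slice-based simplification only removes code manipulating exclusively \emph{unmarked} variables.

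First I would fix an arbitrary variable $v \in Var(\mathcal{F})$. By hypothesis $\mathcal{F}$ is built upon relevant variables, so $v$ is relevant, i.e.\ there exist Def-Use chains $input$~\chain{}~$v$ and $v$~\chain{}~$output$. The first chain, combined with the soundness requirement on taint analysis (it marks \emph{at least} every variable $w$ with $input$~\chain{}~$w$), forces the taint analysis to mark $v$. Symmetrically, the second chain and the soundness requirement on slicing (it marks \emph{at least} every variable $w$ with $w$~\chain{}~$output$) force the slice analysis to mark $v$. Since $v$ was arbitrary, every variable of $Var(\mathcal{F})$ is marked by both analyses, hence ineligible for constant folding (taint) or removal (slice).

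Next I would conclude from this that $\mathcal{F}$ itself is preserved. The guard $\mathcal{G}$ and the action $\mathcal{A}$ are, by definition of $Var(\mathcal{F})$, built only from variables that are all marked; therefore the condition test and the branch bodies are retained verbatim by both transformations, which is exactly the claimed resistance.

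The delicate point — and the one I would treat most carefully — is the transition from ``all variables of $\mathcal{F}$ are marked'' to ``$\mathcal{F}$ itself is preserved''. This requires that the slicing notion in play respects \emph{control dependence}: retaining the variables of the guard is not sufficient unless the conditional statement branching on them (and thus the path-splitting behaviour realizing the $\gamma$ extra paths) is itself retained. I would therefore make explicit the standard backward-slicing property that any statement on which a retained statement is control-dependent is itself retained; applied to the branches of $\mathcal{F}$, whose bodies mention marked action variables, this guarantees the guard test is kept as well. With that observation in place the argument is a pure definition chase — the only real content lies in pinning down ``resistant'' precisely enough that soundness of the two analyses does the rest.
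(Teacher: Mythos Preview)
Your proposal is correct and follows essentially the same approach as the paper: both arguments unfold the definitions, observe that every $v\in Var(\mathcal{F})$ is relevant and hence marked by any sound taint analysis (because $input$ reaches $v$) and any sound slice analysis (because $v$ reaches $output$), and conclude that neither simplification touches $\mathcal{F}$. Your treatment is in fact more careful than the paper's, which simply asserts that marking all variables yields ``no simplification on $\mathcal{F}$'' without spelling out the control-dependence step you rightly flag.
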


\begin{proof}
By definition, a sound taint analysis $\mathcal{A_T}$ will mark any relevant variable 
(as they depend from input).  
Hence, if $\mathcal{F}$ is built upon relevant variables, then all variables 
$v \in Var(\mathcal{F})$ will be marked 
by $\mathcal{A}$, hence taint analysis $\mathcal{A}$ will yield no simplification on $\mathcal{F}$.  
In the same manner,  a sound slice analysis $\mathcal{A_S}$ will mark any relevant variable 
(as they impact the output), 
implying that  if $\mathcal{F}$ is built upon relevant variables, then analysis $\mathcal{A_S}$ 
will yield no simplification on $\mathcal{F}$. 
\end{proof}

{\it We can actually note that to obtain resitance by-design: (1) against tainting, it is
    sufficient to have input \chain{} $var(\mathcal{G})$, and (2) against slicing, it is sufficient to have  $var(\mathcal{A})$ 
    \chain{} output because if a slicing analysis is able to remove the action $\mathcal{A}$ of the foking point then all branches
    can be simplified altogether.}

\paragraph{Implementation} Relevant variables can be identified in at least two ways. 
First, one can modify standard compiler algorithms computing  {\it possible} 
Def-Use chains in order to compute {\it real} Def-Use chains -- technically, going from a 
{\it may analysis} to a {\it must analysis}. 
A more original solution observes at runtime  a set of real Def-Use chains 
and deduces accordingly a set of relevant variables. This method does not require any 
advanced static analysis, only basic dynamic tracing features.   

\section{Threats}
\label{sec:threats}

In this section we discuss possible threats to path-oriented protections and propose
adequate mitigations. Indeed, when weaving the forking points within the code of a program, we need to
ensure that they are hard to discover or remove. 
Three main attacks appear
to seem effective against path-oriented protections:
\begin{enumerate*}
\item \label{thr:taint} taint analysis, 
\item \label{thr:slicing} backward slicing,  
\item and pattern attacks. 
\end{enumerate*}
We showed how path-oriented protections can be made {\it resistant by-design} to
Taint and Slice in \cref{sec:resbydesign}. 
We discuss now pattern attacks, as well as  stealth issues and the specific
unfriendly case of programs with a small input space.

\begin{figure}[!htbp]
  \medskip
 \begin{enumerate}[label=\Large\protect\textcircled{\small\arabic*}]
 \item
   \begin{minipage}[t]{0.99\linewidth}
\vspace*{-.5cm}
\begin{lstlisting}[backgroundcolor=\color{lightgrey}]
for (int i = 0; i++; i < input) a++;
\end{lstlisting}
     \end{minipage}
     
   \item
     \begin{minipage}[t]{0.99\linewidth}
\vspace*{-.45cm}
\begin{lstlisting}[mathescape,backgroundcolor=\color{lightgrey}]
for (int i = 0; i++; i < input) 
  a = (a ^ 1) + 2 * (a & 1);
\end{lstlisting}
     \end{minipage}
   \item
     \begin{minipage}[t]{0.99\linewidth}
\vspace*{-.4cm}
\begin{lstlisting}[backgroundcolor=\color{lightgrey}]
int i = 0;
while (i < input) { i++; a++; }
\end{lstlisting}
     \end{minipage}
            
   \item
     \begin{minipage}[t]{0.99\linewidth}
\vspace*{-.4cm}
\begin{lstlisting}[backgroundcolor=\color{lightgrey}]
int f(int x) { 
  return (x <= 0 ? 0 : f(x - 1) + 1);
}

a = f(input);
\end{lstlisting}
     \end{minipage}

   \item
     \begin{minipage}[t]{0.99\linewidth}
\vspace*{-.4cm}
\begin{lstlisting}[backgroundcolor=\color{lightgrey}]
#define A // arbitrary value

int f(int x) {
  return x <= 0 ? 0 : A + g(x - 1);
} 

int g(int x) { 
  return !x ? 1 - A : 2 - A + f(--x);
}

a = f(input);
\end{lstlisting} 
     \end{minipage}
   \end{enumerate}
 \tasseup
   \caption{Several encodings of protection \obfor}
   \label{fig:pattern-examples}
\end{figure}

\subsection{Pattern attacks}
\label{subsec:attack:pattern}  
\label{sec:attack:pattern}

{\it Pattern attacks} search for specific patterns in the code of
a program to identify, and remove, known obfuscations. This kind of
analysis assumes  more or less  similar 
constructions across all implementations of an obfuscation scheme. 
A general defense against pattern attacks is {\it diversity}. It works 
well in the case of path-oriented protections: on the one hand the schemes we 
provide can be implemented in many ways,  
and on the other hand our framework provides guidelines to design new schemes 
-- altogether, it should be enough to defeat pattern attacks.     
Regarding diversity of implementations,  
the standard \obfor forking point  can be for example replaced by a \lstinline{while} 
loop,  (mutually) recursive  function(s), 
the loop body can be masked through MBAs, etc. These variants can be combined as in
\cref{fig:pattern-examples},  
and we can imagine many other variations. 

The other schemes as well can  be implemented in many ways, and 
we could  also think of ROP-based encoding~\cite{DBLP:conf/ccs/Shacham07} 
or  other  diversification techniques.  
Altogether, it should provide a powerful enough mitigation against pattern attacks. 

\subsection{Stealth}
\label{sec:stealth}

In general, code protections are better when hard to identify, in order to
prevent human-level attacks like stubbing parts of the code or designing targeted
methods. Let us evaluate the stealthiness of path-oriented protections (summary in \cref{tab:obf-properties}). 
\obsplit and \obfor do not use rare operators or exotic control-flow structures,
only some additional conditions and loops scattered through the program. Hence
\obsplit and \obfor are  considered hard to detect on binary
code, though \obfor\ especially may be visible at source level. 
\obbanescu is easy to spot at source level: \lstinline{switch} statements with
hundreds of branches are indeed distinctive.
Compilation makes it harder to find but the height of the produced binary search
trees or the size of the generated jump table are easily perceptible. 
\obwrite stands somewhere in between. It cannot be easily discovered
statically, but is trivial  to detect dynamically. However, since self-modification and
unpacking are common in obfuscated codes,  \obwrite\ could well be
mistaken for one of these more standard (and less damaging) protections.

\subsection{Beware: programs with  small input space}
\label{subsec:limits}

Resistance by design (\cref{sec:resbydesign}) relies on \emph{relevant variables}, 
so we only have limited room for forking points.
In practice it should not be problematic as \cref{sec:case-study}
shows that we already get very strong protection with only 3 input bytes --
assuming a SVP scheme. 
Yet, for programs with very limited input space, we may 
need to add {\it (fake) crafted inputs} for the input space to become (apparently) larger -- see \obsplit 
example in Fig.~\ref{fig:obf-eg}.    
In this case, our technique still ensures resistance against tainting
attacks, but slicing attacks may now succeed. 
The defender must then rely on well-known (but imperfect) anti-slicing protections   
to blur   code analysis  through hard-to-reason-about constructs, such as 
pointer aliasing, arithmetic and bit-level identities, etc.  

\section{Experimental evaluation}
\label{sec:case-study}
\label{sec:xp-eval}

The experiments below seek to answer four Research Questions\footnote{
Download at \url{https://bit.ly/2wYSEDG} -- deanonymation}:  

\begin{enumerate}[label={\bf RQ\arabic*},ref={\bf RQ\arabic*}]

\item \label{rq1} What is the impact of path-oriented protections on semantic attackers? 
Especially, we 
consider DSE attack and two different attacker goals: Path Exploration \textbf{(Goal 1)} and Secret Finding \textbf{(Goal 2)}. 

\item \label{rq2}
  What is the cost of path-oriented protections for the defender in  runtime overhead and 
  code size increase?  

\item \label{rq3}
  What is the effectiveness of our resistance-by-design  mechanism against  taint and slice attacks? 

\item \label{rq4} What
  is the difference between standard protections, path-oriented protections and SVP protections?

\end{enumerate}

\subsection{Experimental setup}
\label{sec:implem}
\label{sec:implementation}

\paragraph{Tools}  
Our attacker mainly comprises the state-of-the-art source-level DSE tool 
\klee  (version 1.4.0.0 with LLVM 3.4, POSIX runtime and  STP
solver). \klee is highly optimized \cite{DBLP:conf/osdi/CadarDE08} and works from source code, so {\it it is arguably the worst case DSE-attacker 
we can face} \cite{BanescuCGNP16}. We have considered standard search heuristics
(DFS, BFS, Non-Uniform Random Search) but report only about DFS 
as others perform slightly worse ({\it see Appendix}). Also, we used two other {\it binary-level} DSE tools,  
\binsec \cite{DBLP:conf/wcre/DavidBTMFPM16} and \triton \cite{Triton}, with similar results. 

Regarding standard defense, we use Tigress \cite{CollbergMMN12}, a freely available state-of-the-art obfuscator featuring many standard obfuscations and allowing to precisely control which ones are used -- making Tigress 
a tool of choice for the systematic evaluation of deobfuscation methods
\cite{DBLP:conf/sp/BardinDM17,SalwanBarPot18,BanescuCGNP16}.  

\paragraph{Protections} 
We only consider tractable path-oriented protections  
and select both a new SVP scheme (\obfor) and an existing non-SVP scheme (\obsplit), inserted in a 
robust-by-design way. 
We  vary the number of forking points per path (parameter $k$). 

We also consider standard protections:  {\it Virtualization} (up to 3 levels), 
{\it arithmetic encoding} and {\it flattening}~\cite{WangHKD01}.   Previous work
\cite{BanescuCGNP16} has shown that nested virtualization is the 
sole standard protection useful against DSE. Our results confirm that, so we report only results about virtualization {\it (other results  partly in Appendix)}. 

\subsection{Datasets}
\label{sec:datasets}

{\it We select small and medium  programs for experiments as they represent the
 worst case for program protection. If path-oriented protections can slow down 
 DSE analysis substantially on smaller 
 programs, then those protections can only give better results for larger programs.}    

\paragraph{Dataset \#1}
This synthetic dataset from Banescu \textit{et al}
\footnote{\url{https://github.com/tum-i22/obfuscation-benchmarks}} \cite{BanescuCGNP16}  offers a valuable diversity of functions
and has already been used to assess resilience against DSE. 
It has 48 C programs (between 11 and 24 lines of code)  including
control-flow statements, integer arithmetic and system calls to \lstinline{printf}. 
We exclude 2 programs because reaching full coverage took
considerably longer than for the other 46 programs  and blurred the overall results. 
Also, some programs have only a 1-byte input space, making them too easy to brute force (Goal 2).  
We turn them into equivalent 8-byte input programs with same number of paths -- additional input are not used by the program, 
but latter protections will rely on them. 
The maximum time to obtain full coverage on the {\bf 46} programs with \klee is $33s$, mean time  is $2.34s$ (\cref{sec:dataset:benchmark}). 

\paragraph{Dataset \#2} 
The second dataset comprises 7 larger realistic programs, representative of real-life 
protection scenarios: 4 hash functions (\textsf{City}, 
\textsf{Fast}, \textsf{Spooky}, \textsf{md5}), 2 cryptographic encoding functions 
(\textsf{AES}, \textsf{DES}) and a snippet from the \textsf{GRUB} bootloader. 
Unobfuscated programs have between 101 and 934 LOCs:
\klee needs at most  $33.31s$ to explore all paths, mean time is $8s$  
(\cref{sec:dataset:benchmark}). 

\subsection{Impact on Dynamic Symbolic Execution}
\label{sec:impact-dse}
\label{sec:path-exploration}
\label{sec:secret-finding}

\paragraph{Protocol} To assess the impact of protections against DSE, we
consider the induced {\it slowdown} (time) of symbolic execution on an
obfuscated program w.r.t. its original version. Fore more readable results, we only
report whether DSE achieves its goal or times out.

For Path Exploration (\textbf{Goal 1}), we use programs from {Datasets \#1} and
{\#2}, add the protections and launch \klee untils 
it reports full coverage or times out --  3h for { Dataset \#1}, or a
5,400$\times$ average slowdown, 24h for {Dataset \#2}, or  a  10,000x average slowdown. 

For Secret finding (\textbf{Goal 2}), we  modify the programs from both datasets into ``secret finding'' 
oriented code (e.g.,\textit{win / lose}) and set up \klee\ to stop execution as soon as the secret is found. 
We take the whole Dataset \#2, but restrict  Dataset \#1 to the 15 programs with 16-byte input space. 
We set smaller timeouts (1h for { Dataset \#1}, 3h and 8h for { Dataset \#2})
as the time to find the secret with \klee on the original programs
is substantially lower ($0.3s$ average).

\begin{table}[!htbp]
    \centering
    \caption{Impact of obfuscations on DSE}
    \label{tab:impact-dse}
        \resizebox{\columnwidth}{!}{%
            \begin{tabular}{|l|C{1.2cm}|C{1.2cm}|C{1.2cm}|C{1.2cm}|C{1.2cm}|}
        \hline
        \multirow{2}{*}{\textbf{Transformation}} & \multicolumn{2}{c|}{\textbf{Dataset \#1}} &
        \multicolumn{3}{c|}{\textbf{Dataset \#2}} \\
        \cline{2-6}
	&&&&&\\[-0.9em]
        (\#TO/\#Samples) & \itshape Goal 1 \newline 3h TO & \itshape Goal 2 \newline 1h TO & \itshape Goal 1
        \newline 24h TO & \itshape Goal 2 \newline 3h TO & \itshape Goal 2 \newline 8h TO \\
        \hline\hline
        \textbf{Virt} & $0/46$ & $0/15$ & $0/7$ & $0/7$ & $0/7$ \\
        \textbf{Virt $\times2$} & $1/46$ & $0/15$ & $0/7$ & $0/7$ & $0/7$ \\
        \textbf{Virt $\times3$} & $5/46$ & $2/15$ & $1/7$ & $0/7$ & $0/7$ \\
        \hline\hline
        \textbf{SPLIT ($k=10$)} & $1/46$ & $0/15$ & $0/7$ & $0/7$ & $0/7$ \\
        \textbf{SPLIT ($k=13$)} & $4/46$ & $0/15$ & $1/7$ & $1/7$ & $0/7$ \\
        \textbf{SPLIT ($k=17$)} & $18/46$ & $2/15$ & $3/7$ & $2/7$ & $1/7$ \\
        \textbf{FOR ($k=1$)} & $2/46$ & $0/15$ & $0/7$ & $0/7$ & $0/7$ \\
        \textbf{FOR ($k=3$)} & $30/46$ & $8/15$  & $3/7$ & $2/7$ & $1/7$ \\
        \textbf{FOR ($k=5$)} & $46/46$ & $15/15$ & $7/7$ & $7/7$ & $7/7$ \\
        \hline
    \end{tabular}} 
\end{table}

\paragraph{Results \& Observations} \cref{tab:impact-dse} shows the number of timeouts during symbolic execution for each obfuscation
and goal. For example, \klee is always able to cover all paths on Dataset \#1  against simple Virtualization (0/46 TO) -- the protection is useless here,  
while it fails on $\approx 40\%$ of the programs with  \obsplit  ($k=17$), and never succeeds with  \obfor  ($k=5$).   

As expected, higher levels of protections (more virtualization layers or more forking points) 
result in better protection.
Yet, results of \cref{sec:impact-perf} will show that while increasing forking points is cheap, increasing levels of virtualization is quickly prohibitive. 

Virtualization is rather weak for both goals  -- only 3 levels of virtualization 
manage some protection.  
\obfor performs very well for both goals: with $k=3$ and Dataset \#1, \obfor induces a timeout for
more than half the programs for both goals, which is significantly better than {\bf Virt}$\times3$. With $k=5$, all programs timeout.
In between, \obsplit is efficient for Goal 1 (even though it requires much higher $k$ than \obfor) but not for Goal 2 -- see for example Dataset \#1 and $k=17$: 
39\% timeouts (18/46) for Goal 1, only 13\% (2/15) for Goal 2. 

\paragraph{Other (unreported) results} 
All standard protections from Tigress we used turns out to be ineffective against DSE -- for example  \textit{Flattening} and 
 \textit{EncodeArithmetic} on Dataset\#1 slows path exploration by a maximum factor of 10, 
which is far from critical. 
Search heuristics obviously do not make any difference in the case of Path
Exploration (Goal 1). Still, DFS tends to perform slightly better than BFS and
NURS against \obsplit in the case of Secret Finding (Goal 2). No
other difference is visible.
Experiments with   two binary-level DSE engines supported by different SMT solvers  
(\binsec \cite{DBLP:conf/wcre/DavidBTMFPM16} with  
Boolector\cite{DBLP:conf/tacas/BrummayerB09}, and \triton \cite{Triton} with  Z3\cite{DBLP:conf/tacas/MouraB08}) 
are in line with those reported here. Actually, as expected, both engines perform worse than \klee.  
Part of these results can be found in Appendix. 

\paragraph{Conclusion} As already stated in the literature, standard protections such as nested virtualization are mostly inefficient against DSE attacks. 
Path-oriented protections are shown here to offer a stronger protection. Yet, care must be taken. 
Non-SVP path protections such as \obsplit\ do improve over nested virtualization 
(\obsplit\ with $k=13$ is roughly equivalent to \textbf{Virt $\times3$}, with $k=17$ it is clearly superior), but they provide only a weak-to-mild protection  
 in the cases of Secret Finding (Goal 2) or large time outs.   
On the other hand, SVP protections (represented here by \obfor) 
are able to discard all DSE attacks on our benchmarks for both Path Exploration 
and Secret Finding with only $k=5$,
demonstrating a protection power
against DSE far above those of standard and  non-SVP path protections. 

To conclude, path-oriented protections are indeed a tool of choice against
DSE attacks (\ref{rq1}), much stronger than standard ones
(\ref{rq4}). In addition, SVP allows to predict the strength difference of
these protections (\ref{rq4}), against Coverage or Secret Finding.

\subsection{Impact on Runtime Performance} 
\label{sec:impact-perf}

\paragraph{Protocol} We evaluate the cost of path-oriented protections by measuring
the {\it runtime overhead} (RO)
and the (binary-level) {\it code size increase} (CI) of an obfuscated program w.r.t. its original version.
We  consider also two variants of \obfor -- its recursive encoding REC (\cref{sec:attack:pattern}) and the more robust P2 encoding (\cref{sec:xp:robust}), as well as  the untractable  word-level 
\obfor scheme (\cref{sec:strong-schemes-forking-points}), coined WORD.  

\begin{table}[!htbp]
    \centering
    \caption{Impact of obfuscations on runtime performance}
    \label{tab:impact-perf}
    \resizebox{\columnwidth}{!}{%
        \begin{tabular}{|l|R{1.4cm}|R{1.2cm}|R{1.4cm}|R{1.2cm}|}
        \hline
        \multirow{2}{*}{\textbf{Transformation}} & \multicolumn{2}{c|}{\textbf{Dataset \#1}} &
        \multicolumn{2}{c|}{\textbf{Dataset \#2}} \\
        \cline{2-5}
        &&&&\\[-0.9em]
        & \multicolumn{1}{c|}{\itshape RO} & \multicolumn{1}{c|}{\itshape CI} & \multicolumn{1}{c|}{\itshape RO} & \multicolumn{1}{c|}{\itshape CI} \\
        \hline\hline
        &&&&\\[-0.9em]
        \textbf{Virt} & $\times1.5$ & $\times1.5$ & $\times1.7$ & $\times1.4$ \\
        \textbf{Virt $\times2$} & $\times15$ & $\times2.5$ & $\times5.1$ & $\times2.1$ \\
        \textbf{Virt $\times3$} & $\times1.6\cdot10^3$ & $\times4$ & $\times362$ & $\times3.0$ \\
        \hline\hline
        &&&&\\[-0.9em]
        \textbf{SPLIT ($k=10$)} & $\times1.2$ & $\times1.0$ & $\times1.0$ & $\times1.0$ \\
        \textbf{SPLIT ($k=13$)} & $\times1.2$ & $\times1.0$ & $\times1.0$ & $\times1.0$ \\
        \textbf{SPLIT ($k=50$)} & $\times1.5$ & $\times1.5$ & $\times1.1$ & $\times1.0$ \\
        \textbf{FOR ($k=1$)} & $\times1.0$ & $\times1.0$ & $\times1.0$ & $\times1.0$ \\
        \textbf{FOR ($k=3$)} & $\times1.1$ & $\times1.0$ & $\times1.0$ & $\times1.0$ \\
        \textbf{FOR ($k=5$)} & $\times1.3$ & $\times1.0$ & $\times1.1$ & $\times1.0$ \\
        \textbf{FOR ($k=50$)} & $\times1.5$ & $\times1.5$ & $\times1.2$ & $\times1.1$ \\
        \hline\hline
&&&&\\[-0.9em]
        \textbf{FOR ($k=5$) P2} & $\times1.3$ & $\times1.0$ & $\times1.1$ & $\times1.0$ \\
        \textbf{FOR ($k=5$) REC} & $\times3.0$ & $\times1.0$ & $\times2.7$ & $\times1.0$ \\
        \hline
        &&&&\\[-0.9em]
        \textbf{FOR ($k=1$) WORD} & $\times2.6\cdot10^3$ & $\times1.0$ & $\times2.1\cdot10^3$ & $\times1.0$ \\
        \hline
    \end{tabular}}
\end{table}

\paragraph{Results \& Observations} 
Results  are shown in \cref{tab:impact-perf} as average values over
all programs in the datasets.   
As expected, nested virtualization introduces a significant and prohibitive runtime overhead (three layers: $\times 1.6\cdot10^3$ for Dataset \#1 and $\times 362$ for Dataset \#2), 
and each new layer comes at a high price (from 1 to 2: between $\times 3$ and $\times 10$; from 2 to 3: between $\times 70$ and $\times 100$).   
Moreover, the code size is also increased, but in a more manageable way (still, at least $\times 3$ for three layers). 
On the other hand, \obsplit and \obfor introduce only very low runtime overhead (at most $\times 1.3$ on Dataset \#1 and $\times 1.1$ on Dataset \#2), and no noticeable 
code size increase is reported even for $k=50$.
Regarding variants of \obfor, P2 does not show any overhead w.r.t.~\obfor, while 
the recursive encoding REC comes at a higher price. Finally, {\it as predicted by our framework}, WORD  is  intractable.  

\paragraph{Conclusion}
As expected, tractable path-oriented protections indeed yield only a very slight overhead, both in terms of time or code size (\ref{rq2}),  
and improving the level of protection ($k$) is rather cheap, while nested virtualization comes at a high price (\ref{rq4}).  
Coupled with results of \cref{sec:impact-dse}, it turns out that path-oriented protections offer a much better anti-DSE protection than 
nested virtualization at a runtime cost several orders of magnitude lower. Also, the code size increase due to path-oriented protections seems 
compatible with strict memory requirements (e.g., embedded systems) where it is not the case for nested virtualization.     

\begin{table}[!htbp]
    \caption{Robustness  of path-oriented protections}
    \label{tab:robustness}
    \centering
    \begin{tabular}{|l|C{1cm}|C{2.cm}||C{1cm}|}
        \hline
        & \multicolumn{3}{c|}{}\\[-0.9em]
        \multirow{2}{*}{\textbf{Tool}} & \multicolumn{3}{c|}{\textbf{Robust ?}} \\
        \cline{2-4}
        &&&\\[-0.9em]
        & P1 & P2 & P3 \\ 
        & (basic)   & (obfuscated)   &  (weak)   \\
        \hline
        &&&\\[-0.9em]
        GCC -Ofast & \gooddef/ & \gooddef/ & \baddef/ \\
        clang -Ofast& \baddef/ & \gooddef/ & \baddef/ \\
        \hline\hline
        Frama-C Slice & \gooddef/ & \gooddef/ & \baddef/ \\
        Frama-C Taint & \gooddef/ & \gooddef/ & \gooddef/ \\
        \hline\hline
        \triton (taint) & \gooddef/ & \gooddef/ & \gooddef/ \\
        \klee           & \gooddef/ & \gooddef/ & \gooddef/ \\     
        \hline
    \end{tabular}

\gooddef/: no protection simplified  \hfill  \baddef/: $\ge 1$ protection simplified

\end{table}

\subsection{Robustness to taint and slice attacks}
\label{sec:xp:robust}

\paragraph{Protocol} We consider the heavily optimized compilers Clang \&  GCC (many simplifications including slicing), the industrial-strength Frama-C static code analyzer (both its Taint and Slice plugins together 
with precise interprocedural range analysis)    
as well as \triton (which features tainting)   and \klee.  
We take 6 programs from 
 dataset \#1 (with 16-byte input space) and all programs from dataset \#2. 
 We use the \obfor scheme (k=3)  weaved into the code following our
 robust-by-design method (\cref{sec:resbydesign}).  Actually we consider 3
 variants of the scheme: {\bf P1}, {\bf P2} and {\bf P3}. P1 is the simple
 version of \obfor presented in \cref{fig:pattern-examples}, P2 is a mildly
 obfuscated version (adds a {\tt if} statement always evaluating to
 \texttt{true} in the loop -- opaque predicate) and P3 relies on fake inputs (a
 dangerous construction discussed in \cref{subsec:limits}).
A protection will be said to be {\it simplified} when the number of explored paths for
full coverage is much lower than expected (DSE tools), no protection code is
marked by the analysis tool (Frama-C) or running \klee on the produced code does
not show any difference (compilers).

\paragraph{Results \& Observations} Results in \cref{tab:robustness} confirm our expectations.
No analyzer but \textsf{clang} is able to simplify 
our robust-by-design protections (P1 and P2), whereas the weaker P3 is broken 
by slicing  (\textsf{GCC}, \textsf{clang}, Frama-C)  but not by tainting -- exactly as pointed out in \cref{subsec:limits}. 
Interestingly, \textsf{clang -Ofast} simplifies scheme P1, {\it not due to slicing} (this is resistant by design), 
but thanks to some loop simplification more akin to a pattern attack, relying
on finding an affine relation between variables and loop counters. The slightly obfuscated version P2 
is immune to this particular attack.  

\paragraph{Conclusion} It turns out that our robust-by-design method indeed works as expected against taint and slice (\ref{rq3}). 
Yet, care must be taken to avoid pattern-like simplifications. Note that in a real scenario, the attacker must work 
on binary code, making  static analysis much more complicated. Also,  virtualization, unpacking or self-modification 
can be used in addition to path-oriented protections to completely hinder static analysis. 

\section{Application: hardened benchmark}
\label{sec:application:hardened}

We propose a set of benchmarks containing 4 programs
from Ba\-nes\-cu's dataset and our 6 real-world programs (\textsf{GRUB} excluded) 
from \cref{sec:datasets}  in order to help advance the state of the art 
of symbolic deobfuscation.  
Each program comes with two setups, Path 
Exploration and Secret Finding,  
obfuscated with both 
a path-oriented protection (\obfor k=5,  taint- and slice- resistant) and a 
virtualization layer against human and static attacks\footnote{Sources 
available at \url{https://bit.ly/2GNxNv9}}. 
\cref{tab:hade} shows the performance of \klee, 
\triton and \binsec 
(Secret Finding, $24$h timeout).   
Unprotected  and virtualized codes are easily solved, but hardened 
versions  remain unsolved within the timeout, for every tool.

\begin{table}[!htbp]
\caption{Results on 10 hardened examples (secret finding)} \label{tab:hardened-benchmark}
\label{tab:hade}

\resizebox{\columnwidth}{!}{%
\begin{tabular}{|l|c|c|c|}

\hline 
        &  Unprotected    & Virt $\times1$    &   {\bf Hardened} -- \obfor (k=5)  \\   
        &   (TO = 10 sec) &   (TO = 5 min)              &     \bf (TO = 24h)               \\   

\hline

\klee     &   10/10         &    10/10                    &    {\showmore 0/10} \gooddef/                      \\ 

\binsec   &    10/10         &    10/10                    &   {\showmore 0/10} \gooddef/                      \\ 

\triton   &   10/10         &    10/10                    &   {\showmore 0/10}  \gooddef/                      \\ 

\hline

\end{tabular}}
\end{table}

\section{Discussion}
\label{sec:discussion}

\subsection{On the methodology}

We discuss biases our experimental evaluation could suffer from. 

\paragraph{Metrics}
We add overhead metrics (runtime, code size) to the commonly used ``DSE
slowdown'' measure \cite{BanescuCGNP16,SalwanBarPot18},  
giving a better account of the pros and cons of
obfuscation methods.

\paragraph{Obfuscation techniques \& tools}
We consider  the strongest standard obfuscation methods known against DSE, as
identified in previous systematic studies~\cite{BanescuCGNP16,SalwanBarPot18}.
We restrict ourselves to their implementation in Tigress, a widely respected and freely
available obfuscation tool considered state-of-the-art  
-- studies including Tigress
along packers and protected malware~\cite{DBLP:conf/sp/BardinDM17,YadegariJWD15}
do not report serious 
deficiencies about its protections. 
Anyway, the evaluation of the path-oriented protections is independent
of Tigress.    

\paragraph{DSE engines}
We use 3 symbolic execution engines (mostly \klee, also \binsec and \triton) working on
different program representations (C source, binary), with very similar final
results. Moreover, \klee is a highly respected tool, implementing   
advanced path pruning methods
(\textit{path merging}) and solving strategies. 
It also benefits from {\it code-level optimizations} of Clang 
as it operates on LLVM bitcode.   
Previous work~\cite{BanescuCGNP16}
considers \klee as the {\it worst-case
attackers}, in front of \triton~\cite{Triton} and {\sf
  angr}~\cite{DBLP:conf/sp/Shoshitaishvili16}. 

 \paragraph{Benchmarks}
Our benchmarks include Banescu et al.'s synthetic benchmarks~\cite{BanescuCGNP16},
{\it enriched by  7 larger real-life programs} consisting essentially of hash
functions (a typical software asset one has to protect)~\cite{SalwanBarPot18}. 
We also work both on source and binary code to
add another level of variability. 
As already said, the considered programs are rather small, {\it on purpose}, to embody {\it
  the defender worst case}.
Note that this case still represents real life situations, e.g., protecting  
small critical assets from  {\it targeted} DSE attacks.

\subsection{Generality of path-oriented protections}
\label{sec:generality}

{Path-oriented protections} should be effective
on a larger class of  attacks besides DSE -- actually, all major semantic program
analysis techniques.  
Indeed, all path-unrolling methods will  suffer from  path explosion, including  
Bounded model checking \cite{B09}, backward bounded DSE
\cite{DBLP:conf/sp/BardinDM17}  and abstract interpretation with aggressive
trace partitioning \cite{DBLP:conf/wcre/Kinder12}.  
Model checking based on counter-example guided refinement~\cite{DBLP:conf/popl/HenzingerJMS02} will suffer both from path explosion and
Single Value Path protections -- yielding ineffective refinements in the vein of
\cite{BruniGG18}. Finally, standard abstract interpretation
\cite{DBLP:journals/toplas/BalakrishnanR10} will suffer from significant
precision loss due to the many introduced {\it merge points}
-- anyway purely static techniques cannot currently cope with
self-modification or packing. 

\vspace*{-.223cm}
\subsection{Countermeasures and mitigations}
\label{sec:countermeasures}

We can think of three possible mitigations a DSE attacker could use
against our new defenses:
\begin{enumerate*}
\item remove the protection through tainting and slicing;
\item detect our defenses via pattern attacks and  
\item directly handle the protection through advanced semantic techniques for
  loops. 
\end{enumerate*}
{\it Slicing, tainting} and {\it pattern attacks}, are 
thoroughly discussed in \cref{sec:resbydesign,sec:threats}. 

{\it Advanced program analysis techniques for loops} is a very hot research
topic, still largely open in the case of under-approximation methods such as
DSE.  The best methods for DSE are based on path merging
\cite{DBLP:journals/cacm/AvgerinosRCB16}, but they lack a generalization step
allowing to completely capture loop semantics.  Even though \klee implements
such path merging, it still fails against our protections.  Widening in
abstract interpretation \cite{DBLP:journals/toplas/BalakrishnanR10} 
over-approximates loop semantics, but the result is often very crude: using
such over-approximations inside DSE is still an open question.  Anti-implicit
flow techniques \cite{KingHHJ08, LiuM10} may identify dataflow hidden as
control-flow (it identified for instance a \obfor forking point), yet they do
not recover any precise loop semantics and thus cannot reduce path explosion.

Finally, note that: 
\begin{enumerate*}
\item obfuscation schemes can easily be scattered along several
functions (see alternative \obfor\ encodings in \cref{subsec:attack:pattern}) to
bar expensive but targeted intra-procedural attacks --  attackers will need
 (costly) precise inter-procedural methods,
\item real-life attacks are performed on binary code --  binary-level static analysis is known to be extremely hard 
to get precise; and 
\item static analysis is completely broken 
by packing or self-modification.   
\end{enumerate*}

\vspace*{-.1cm}
\section{Related Work}
\label{sec:related-work}

We have already discsussed obfuscation, symbolic execution and symbolic deobfuscation at length throughout the paper, including  successful
applications of DSE-related techniques to deobfuscation
\cite{CooganLD11,YadegariJWD15,SalwanBarPot18,DBLP:conf/sp/BardinDM17}.
In addition, Schrittwieser et al.~\cite{Schrittwieser:2016}
give an exhaustive survey about program analysis-based deobfuscation, 
while Schwartz et al.~\cite{SchwartzAB10} review
DSE, tainting  and their applications in  security.

\paragraph{Limits of symbolic execution}
Anand et al. \cite{ANAND20131978} describe, in the setting of  automatic testing,    the three major weaknesses
of DSE: \textit{Path explosion}, \textit{Path divergence}
and \textit{Complex constraints}. 
Cadar~\cite{Cadar15} shows that compiler optimizations can sensibly alter
  the performance of a symbolic analyzer like \klee, confirming  
 the folklore knowledge that strong enough compiler
optimizations resemble code obfuscations.  
That said, the performance penalty is far from offering a strong defense 
against symbolic deobfuscation.  

\paragraph{Constraint-based anti-DSE protections} 
Most anti-DSE techniques target the constraint solving engine 
 through hard-to-solve predicates. 
 The impact on symbolic deobfuscation through the complexification of constraints has been studied by
Banescu et al. \cite{BanescuCP17}. 
Biondi et al. \cite{BiondiJLS17} propose an obfuscation
based on \textit{Mixed Boolean-Arithmetic} expressions \cite{ZhouMGJ07} to complexify {\it points-to functions},  
making it harder for solvers to determine the trigger. 
Eyrolles et al. \cite{EyrollesGV16} present a similar
obfuscation 
together with a MBA expression simplifier  based on  pattern
matching and arithmetic simplifications. 
Cryptographic hash functions 
hinder current solvers and can replace MBA \cite{SharifLGL08}.    
In general, formula hardness is difficult to predict, and solving such formulas
is a hot research topic. Though cryptographic functions resist solvers up to
now, promising attempts \cite{DBLP:conf/vstte/NejatiLGCG17} exist. More
importantly, private keys must also be protected against symbolic
attacks, yielding a potentially easier deobfuscation subgoal -- a standard
whitebox cryptography issue.

\paragraph{Other anti-DSE protections} 
Yadegari and Debray \cite{YadegariD15} describe obfuscations 
thwarting standard byte-level taint analysis, possibly resulting in missing
legitimate paths for DSE engines using taint analysis (\triton does, \klee and
\binsec do not).  It can be circumvented in the case of taint-based DSE by
bit-level tainting \cite{YadegariD15}.  \textit{Symbolic Code} combines this
idea with {\it input-dependent trigger-based self modifications}. Here, the
dynamic analysis part of DSE must be able to detect these input-dependent
self-modifications. Solutions exist but must be carefully integrated
\cite{DBLP:conf/sp/BardinDM17,BonfanteFMRST15}. 
Wang et al. \cite{WangMJG11} propose an obfuscation based on mathematical
conjectures in the vein of the Collatz conjecture.  This transformation
increases the number of (symbolic) paths through an input-dependent loop, while 
the conjecture (should) ensure that the loop always converges to the same result. 
Banescu et al. \cite{BanescuCGNP16} propose  an anti-DSE technique   
based on encryption and  proved to be highly effective, but it 
requires some form of secret sharing (the key) and thus falls outside  the strict scope of 
MATE attacks that we consider here.   
Stephens et al.  \cite{StephensYCDS18} 
recently proposed an obfuscation based on covert channels 
(timing, etc.) to hide data flow within invisible states. Current tools 
do not handle correctly this kind of protections.  However, the method ensures only
probabilistic correctness and thus cannot be applied in every context.

\paragraph{Systematic evaluation of anti-DSE techniques} 
Banescu et al.~\cite{BanescuCGNP16} 
set the  ground for the experimental evaluation 
of symbolic deobfuscation techniques. Our own experimental evaluation 
extends and refines their method in several ways: new metrics, 
different DSE settings, larger examples.  
Bruni et al. \cite{BruniGG18} propose a mathematically proven obfuscation against
Abstract Model Checking attacks. 

\vspace*{-.1cm}
\section{Conclusion}
\label{sec:conclusions}

Code obfuscation intends to protect proprietary software assets against attacks such as  
reverse engineering or code tampering. Yet, recently proposed (automated) attacks based on 
symbolic execution (DSE) and semantic reasoning  have shown a great potential against 
traditional obfuscation methods. 
We explore a new class of anti-DSE techniques targeting the very weak
spot of these approaches, namely path exploration. We propose a predictive framework 
for understanding such path-oriented protections,    
and we  propose new lightweight, efficient and  resistant 
obfuscations.
Experimental evaluation indicates that our method critically damages symbolic
deobfuscation while yielding only a very small overhead. 

%\clearpage

\bibliographystyle{plain}
\bibliography{references}

\crefalias{table}{apptable}
\makeatletter
\setlength{\@fpsep}{8pt}
\makeatother
\appendix

\clearpage
\section{Additional information on patterns}
\label{sec:annexe-write}

\paragraph{\obwrite}
In this section we give more details about the \obwrite obfuscation. As previously stated in
\cref{sec:strong-schemes-forking-points} this transformation involves self-modification: the code directly modify
the executed instructions at runtime. A step by step exemple of \obwrite is presented in 
\cref{fig:write-explication}.

\begin{figure}[!htbp]
    \caption{Step by Step execution of \obwrite with the runtime value of input being 100}
    \label{fig:write-explication}
    \centering
    \begin{lrbox}{0}
    \begin{tabular}{C{0.2cm}cc}
        \centering
        1 & 
        \begin{lstlisting}
<@$~$@>  L: mov [a], [input]
        \end{lstlisting}
        &
        \begin{lstlisting}
<@$~\color{red}\rightarrow$@> L1: mov L2+off, [input]
<@$~$@>   L2: mov [a], 0
        \end{lstlisting} 
        \\
        &&\\[-0.7em]
        2 &  
        \begin{lstlisting}
<@$~$@>  L: mov [a], [input]
        \end{lstlisting}
        &
        \begin{lstlisting}
<@$~$@>   L1: mov L2+off, [input]
<@$~\color{red}\rightarrow$@> L2: mov [a], <@ \textcolor{red}{runtime\_val}@>
        \end{lstlisting}
        \\
        &&\\[-0.7em]
        3 &  
        \begin{lstlisting}
<@$~$@>  L: mov [a], [input]
        \end{lstlisting}
        &
        \begin{lstlisting} 
<@$~$@>   L1: mov L2+off, [input]
<@$~\color{red}\rightarrow$@> L2: mov [a], <@\textcolor{red}{100}@>
        \end{lstlisting}
        \\
        \end{tabular}
    \end{lrbox}
    \resizebox{\columnwidth}{!}{\usebox0}
\end{figure}

The original code (a simple \texttt{mov}) is replaced by a two-line self-modification.
The instruction at address \texttt{L1} replaces the value ``0'' in line \texttt{L2}
by the runtime value of $input$ (step1). Then, when the program executes the instruction
at \texttt{L2} (step 2) the value given to $a$ is actually the value of $input$, which is $100$
in this exemple.

This transformation does not change the semantics of the program as both code give
$input$'s value to $a$. However, where the original code has one path and one code version
the modified program has one path and one code version for each value of $input$.

\paragraph{\obfor} Here, we further explain how the \obfor pattern is made tractable in time.
As we already stated, using an int variable (or bigger) variable substantially increase
the size of execution traces. To mitigate this side effect we have to apply the \obfor
protection only to \textit{byte} variables. This problem is showned in \cref{fig:for-explication}.

\begin{figure}[!htbp]
    \caption{Impact of variable size to \obfor strength and cost}
    \label{fig:for-explication}
    \centering

    \begin{tabular}{c|c|C{1.5cm}}
        Pattern & Paths & Loop iterations\\
        \hline
        \begin{lstlisting} 
int func (<@\color{red}int x@>) {
  int var = 0;
  for (int i=0; i<x;i++) {
    var++;
  }
  return var;
}
        \end{lstlisting}
        & $2^{32}$ & \color{red} $\leq 2^{32}$ \\
        \hline
        \begin{lrbox}{1}
        \begin{lstlisting}
int func (int x) {
 <@\color{red}char tmp[4] = x;@>
 char var[4] = {0};
 for (int i=0;i<<@\color{red}tmp[0]@>;i++){
  var[0]++;
 }
 for (int i=0;i<<@\color{red}tmp[1]@>;i++){
  var[1]++;
 }
 // same for tmp[2:3]
 return var;
}
    \end{lstlisting}
    \end{lrbox}
    \resizebox{3.4cm}{!}{\usebox1}
    & $2^{32}$ & $\leq 4 \times 2^8$ \\
\end{tabular}
\end{figure}

This trick allows \obfor to offer the same symbolic slowdown at almost no cost ---
none in practice for a real-world program.

\section{Statistics on datasets} \label{sec:dataset:benchmark}

We present  additional  statistics on Dataset  \#1 (\cref{tab:unobf-prog-stat}) and Dataset \#2  (\cref{tab:stat-real-programs}). 
For Dataset \#1, recall that 1-byte input programs from the original dataset from Banescu et al.~\cite{BanescuCGNP16}  are automatically turned into equivalent 8-byte input programs with same number of paths: additional input are not used by the program, 
but latter protections will rely on them. We must do so as they are otherwise too easy to enumerate. 

\begin{table}[h]

  \caption{Statistics on Dataset \#1 (46 programs) }`

  \resizebox{\columnwidth}{!}{%
    \begin{tabular}{|l|C{1.3cm}|C{1.3cm}|C{1.3cm}|C{1.3cm}|}
        \hline
        \multirow{2}{*}{\textbf{Entry size}} & \multicolumn{2}{c|}{\textbf{\#LOC}} & \multicolumn{2}{c|}{\textbf{\klee exec. (s)}} \\
        \cline{2-5}
        &&&&\\[-1em]
        & \textbf{\textit{average}} & \textbf{\textit{StdDev.}} & \textbf{\textit{average}} & \textbf{\textit{StdDev.}} \\
        \hline
        16 bytes & $21$ & $1.9$ & $2.6$s & $6.2$ \\
        \hline
        1 byte (*) & $17$ & $2.2$ & $1.8$s & $6.2$ \\
        \hline
    \end{tabular}}
loc: line of code
    \label{tab:unobf-prog-stat}

\smallskip 

(*) 1-byte input programs are automatically turned into equivalent 8-byte input programs with same number of paths. 
 We report \klee execution time on the modified versions.

\end{table}

\begin{table}[htbp]
  \caption{Statistics on Dataset \#2 (7 programs)}

    \centering

    \begin{tabular}{|l|r|r|}
        \hline
        &&\\[-1em]
        \textbf{Program} & \textbf{locs} & \textbf{\klee exec. (s)} \\
        &&\\[-1em]
        \hline
        City hash & $547$ & $7.41$ \\
        Fast hash & $934$ & $7.74$ \\
        Spooky hash & $625$ & $7.12$ \\
        MD5 hash & $157$ & $33.31$ \\
        \hline
        AES & $571$ & $1.42$ \\
        DES & $424$ & $0.15$ \\
        \hline
        GRUB & $101$ & $0.06$ \\
        \hline
    \end{tabular}
    \label{tab:stat-real-programs}
\end{table}

\section{Additional experiments}
\label{sec:additional-xp}

\paragraph{Search heuristics} Results in \cref{tab:secret-finding3} shows that DSE search heuristics does not impact that much overall results 
(cf.~\cref{tab:impact-dse}).  Depth-first search appears to be slightly better than the two other ones for \obsplit, and non-uniform random search 
  appears to be slightly worse than the two other ones for \obfor. Nothing dramatic yet.  

\begin{table}[!htbp]
    \caption{Impact of search heuristics --  Dataset \#1 -- secret finding -- 1h TO}

     \centering

     \resizebox{\columnwidth}{!}{%
   \begin{tabular}{l|C{1.3cm}C{1.3cm}C{1.3cm}|C{1.3cm}}
\multicolumn{1}{c}{}        & \multicolumn{4}{c}{\textbf{Timeouts}} \\

\multicolumn{1}{c}{} & \multicolumn{1}{c}{\textbf{NURS}} &
                                                            \multicolumn{1}{c}{\textbf{BFS}} &
                                                                                               \multicolumn{1}{c}{\textbf{DFS}} & \multicolumn{1}{c}{\textbf{allpath}} \\ 
\hline
\textbf{Virt} & $0/15$ & $0/15$ & $0/15$ & $0/15$ \\
\textbf{Virt $\times2$} & $0/15$ & $0/15$ & $0/15$ & $0/15$ \\
\textbf{Virt $\times3$} &  $1/15$ &  $1/15$ &  $1/15$ &  $2/15$ \\
\textbf{Flat-Virt} & $0/15$ & $0/15$ & $0/15$ & $0/15$ \\
\textbf{Flat-MBA} & $0/15$ & $0/15$ & $0/15$ & $0/15$ \\
     \hline
\textbf{\obsplit ($\times10$)} & $0/15$ & $0/15$ & $0/15$ & $0/15$ \\
\textbf{\obsplit ($\times13$)} & $1/15$ & $1/15$ & $0/15$ & $1/15$ \\
\textbf{\obfor ($k=1$)} & $0/15$ & $0/15$ & $0/15$ & $0/15$ \\
\textbf{\obfor ($k=2$)} &  $1/15$ &  $1/15$ &  $1/15$ &  $4/15$ \\
\textbf{\obfor ($k=3$)} &  $10/15$ &  $8/15$ &  $8/15$ &  $13/15$ \\
\textbf{\obfor ($k=4$)} &  $\showmore 15/15$ &   $\showmore 15/15$  
& $\showmore 15/15$ &  $\showmore 15/15$ \\

   \end{tabular}}
   \label{tab:secret-finding3}
\end{table}

\paragraph{Runtime overhead} We evaluate how the performance penalty evolved 
for protection \obfor  on very high values of $k$. We take the 15 examples of Dataset \#1 
with large input space, and we vary the size of the input string from 3 to
 100000, increasing the number of forking points accordingly ($k$ between 3 and 100000), 
one forking point (loop) per byte of the
 input string. We run 15 random inputs 15 times for each size and measure the
 average runtime overhead.  \cref{fig:execution-overhead} shows the evolution of
 runtime overhead w.r.t. the number of \obfor loops. 

\begin{figure}[!htbp]
  \centering
  \includegraphics[scale=0.5]{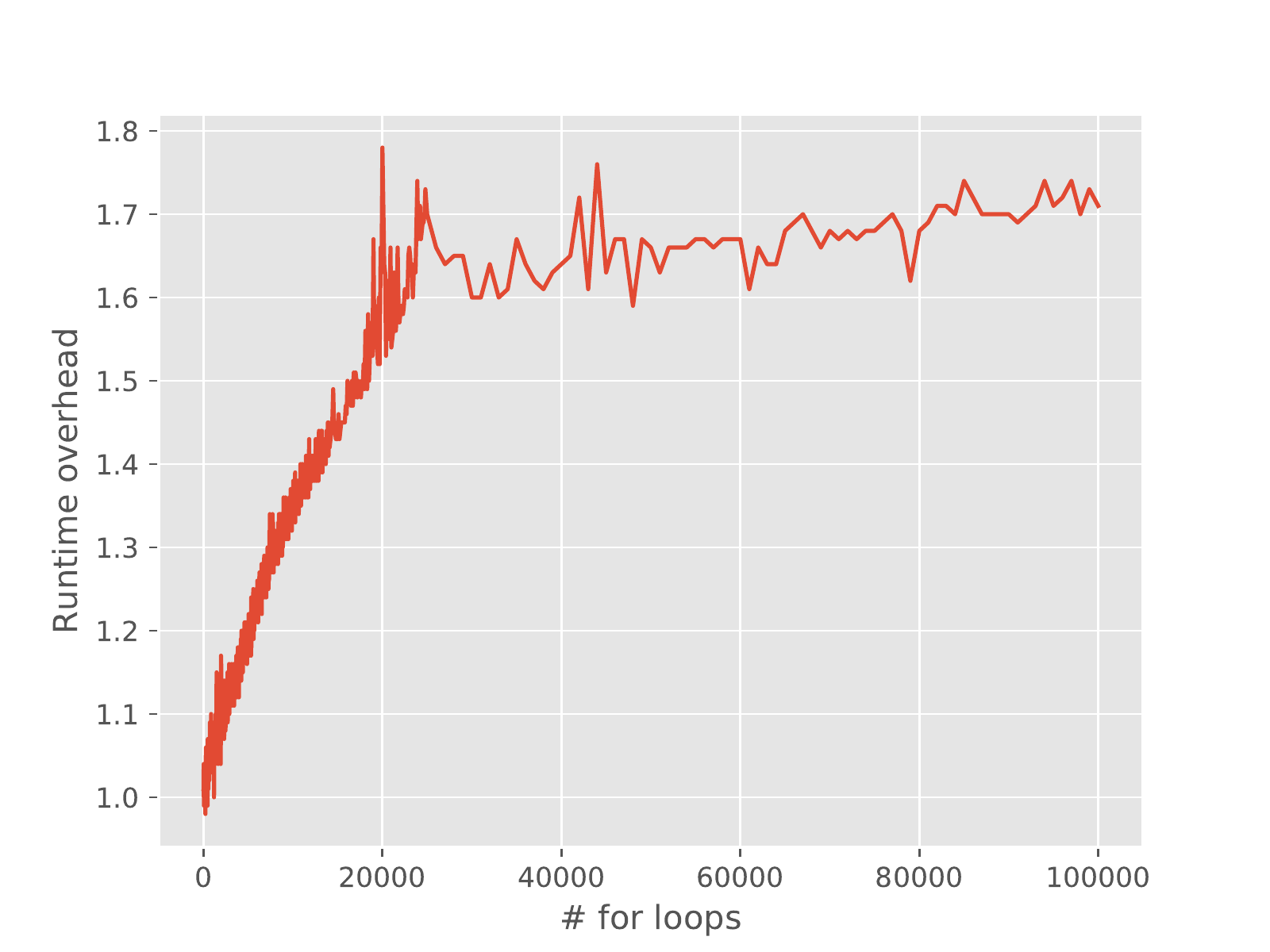}
  \caption{Runtime overhead w.r.t. to the number of \obfor obfuscation loops}
  \label{fig:execution-overhead}
\end{figure}

The runtime overhead stays below 5\% ($\times 1.05$) for
fewer than $k=250$. 
This means in particular that one can
significantly boost \obfor-based protections without incurring big
runtime penalties.

\begin{table*}[htbp]
    \caption{Benchmarking obfuscations on the crackme example -- tool \klee -- 1h30 timeout.}
    \label{tab:motiv-full}
    \centering
    \begin{tabular}{|c|c|l||R{1.5cm}|R{1.5cm}|R{1.5cm}|R{1.5cm}|}
        \cline{2-7}
        \multicolumn{1}{c|}{}
        & \multicolumn{2}{c||}{\multirow{2}{*}{\textbf{Obfuscation type}}}
        & \multicolumn{2}{C{3cm}|}{\textbf{Slowdown} \newline \itshape Symbolic Execution} 
        & \multicolumn{1}{c|}{\multirow{2}{*}{
            \shortstack{\textbf{Overhead }\\ \itshape Runtime}}} 
        & \multicolumn{1}{c|}{\multirow{2}{*}{
            \shortstack{\textbf{Overhead }\\ \itshape Code Size}}} \\ 
        \cline{4-5}
        \multicolumn{1}{c|}{} & \multicolumn{2}{c||}{} & Coverage & Secret &  &  \\
        \hline
        \multirow{7}{*}{\bf Tigress} & \multicolumn{2}{l||}{Virt $\times2$} & $\times5.4\cdot10^3$ & $\times11$ & $\times1.3$ & $\times1.2$ \\
        & \multicolumn{2}{l||}{ Virt $ \times3$} & {\showmore TO} & $\times1.1\cdot10^3$ & $\times41$ & $\times3.0$ \\
        & \multicolumn{2}{l||}{ Virt $ \times4$} & {\showmore TO} & $\times96\cdot10^3$ & $\times4.5\cdot10^3$ & $\times4.0$ \\
        & \multicolumn{2}{l||}{  Virt $  \times5$} &  {\showmore TO} &
                                                                       {\showmore
                                                                       TO} &  $\times449\cdot10^3$ &  $\times5.2$ \\
        & \multicolumn{2}{l||}{Virt-Flat} &  $\times1.0$ & $\times1.8$ & $\times1.1$ & $\times1.5$ \\
        & \multicolumn{2}{l||}{Flat $\times2$} & $\times276$ & $\times1.8$ & $\times1.1$ & $\times1.3$\\
        & \multicolumn{2}{l||}{Flat-EncA} & $\times83$ & $\times1.2$ &  $\showmore \times1.0$ &  $\showmore \times1.0$\\
        \hline\hline
        &&&&&&\\[-1em]
        \multirow{6}{*}{\textbf{Our approach}}&  & $k=1$ & {\showmore TO} & $\times3.3\cdot10^3$ &  $\showmore \times1.0$ &  $\showmore \times1.0$ \\
        & \obfor  &  $k=2$ &  {\showmore TO} &  {\showmore TO} &   $\showmore \times1.0$ &   $\showmore \times1.0$ \\
        & &  $k=3$ &  {\showmore TO} &  {\showmore TO} &   $\showmore \times1.0$ &   $\showmore \times1.0$ \\
        \cline{2-7}
        & & $k=11$ & $\times3.4\cdot10^3$ & $\times2.8$ &  $\showmore \times1.0$ &  $\showmore \times1.0$ \\
        & \obsplit & $k=15$ & TO & $\times3.9$ &  $\showmore \times1.0$ &  $\showmore \times1.0$ \\
        & & $ k=19$ &  {\showmore TO} &  $\times5.1$ &   $\showmore \times1.0$ &   $\showmore \times1.0$ \\
        \hline
    \end{tabular}
\end{table*}

\section{More details on experiments}

We give here more detailed  results on: 

\begin{itemize}

\item the motivating example  (\cref{tab:motiv-full}), 

\item Path Exploration Dataset \#1 (\cref{tab:path-exp}), % and Dataset \#2 (\cref{tab:exhaustive-realworld}), 

\item Secret Finding DataSet \#1 (\cref{tab:secret-finding}). 

\end{itemize}

\begin{table*}[!htbp]
 \caption{Obfuscations on Dataset \#1 --- allpath coverage --- 3h timeout}
     \centering

     \resizebox{\textwidth}{!}{%
   \begin{tabular}{|l|r|r|r||r|r|r||r|r|r||c|}
\cline{2-11}
     \multicolumn{1}{c|}{}        & \multicolumn{3}{c||}{\textbf{DSE Slowdown}} & \multicolumn{3}{c||}{\textbf{Runtime overhead}} & \multicolumn{3}{c||}{\textbf{Code Size increase}} & \multirow{2}{*}{\textbf{\#TO}} \\
\cline{1-10}
\textbf{Transformation} & min & max & avg & min & max & avg & min & max & avg &\\
\hline\hline
\textbf{Virt} & $\times1.0$ & $\times17$ & $\times2.8$ & $\times1.2$ & $\times5.6$ & $\times1.5$ & $\times1.5$ & $\times1.5$ & $\times1.5$ & $0/46$ \\
\textbf{Virt $\times2$} & $\times1.0$ & $\times402$ & $\times47$ & $\times1.3$ & $\times432$ & $\times15$ & $\times2.3$ & $\times2.8$ & $\times2.5$ & $1/46$ \\
\textbf{Virt $\times3$} &  $\showmore \times10$ &  $\times35\cdot10^3$ &  $\times3.0\cdot10^3$ &  $\times3.2$ &  $\times52\cdot10^3$ &  $\times1.6\cdot10^3$ &  $\times3.5$ &  $\times4.6$ &  $\times4$ &  $5/46$\\
\textbf{Flattening} & $\times1.0$ & $\times1.3$ & $\times1.0$ &  $\showmore \times1.0$ & $\times2.0$ & $\times1.8$ &  $\showmore \times1.0$ &  $\showmore \times1.0$ &  $\showmore \times1.0$ & $0/46$ \\
\textbf{EncodeArithmetic} & $\times1.0$ & $\times10$ & $\times3.9$ & $\times1.0$ & $\times2.0$ & $\times1.8$ &  $\showmore \times1.0$ &  $\showmore \times1.0$ &  $\showmore \times1.0$ & $0/46$ \\
\hline\hline
&&&&&&&&&&\\[-1em]
\textbf{\obsplit ($k=10$)} & $\showmore\times10$ & $\times1.2\cdot10^3$ & $\times107$ &  $\showmore \times1.0$ & $\showmore\times1.3$ & $\times1.2$ &  $\showmore \times1.0$ &  $\showmore \times1.0$ &  $\showmore \times1.0$ & $1/46$ \\
\textbf{\obsplit ($k=13$)} & $\showmore\times10$ & $\times15\cdot10^3$ & $\times862$ &  $\showmore \times1.0$ & $\showmore\times1.3$ & $\times1.2$ &  $\showmore \times1.0$ &  $\showmore \times1.0$ &  $\showmore \times1.0$ & $4/46$ \\
\textbf{\obfor ($k=1$)} & $\showmore\times10$ & $\times476$ & $\times209$ &  $\showmore \times1.0$ & $\times1.4$ & $\times1.2$ &  $\showmore \times1.0$ &  $\showmore \times1.0$ &  $\showmore \times1.0$ & $2/46$ \\
\textbf{\obfor ($k=2$)} & $\showmore\times10$ & $\times33\cdot10^3$ & $\times3.7\cdot10^3$ &  $\showmore \times1.0$ & $\times1.4$ &  $\showmore \times1.0$ &  $\showmore \times1.0$ &  $\showmore \times1.0$ &  $\showmore \times1.0$ & $10/46$ \\
\textbf{\obfor ($k=3$)} & $\showmore\times10$ & $\times1.1\cdot10^6$ & $\times2.2\cdot10^5$ &  $\showmore \times1.0$ & $\times1.4$ & $\times1.3$ &  $\showmore \times1.0$ &  $\showmore \times1.0$ &  $\showmore \times1.0$ & $30/46$ \\
\textbf{\obfor ($k=4$)} & $\showmore\times10$ & $\showmore
                                                \times2.2\cdot10^6$ &
                                                                      $\showmore\times5.1\cdot10^5$ &  $\showmore \times1.0$ & $\times1.4$ & $\times1.3$ &  $\showmore \times1.0$ &  $\showmore \times1.0$ &  $\showmore \times1.0$ &  $\showmore 46/46$ \\
\hline\hline
\textbf{Virt $+$ \obfor ($k=2$)} &  $\showmore\times10$ &  $\times5.4\cdot10^5$ &  $\times33\cdot10^3$ &   $\showmore \times1.0$ &  $\times3.8$ &  $\times1.2$ &  $\times1.5$ &  $\times1.6$ &  $\times1.6$ &  $23/46$ \\
\hline
   \end{tabular}}
   \label{tab:path-exp}
\end{table*}

 \begin{table*}[!htbp]
   \caption{Obfuscations on Dataset \#1 --- ---
     Secret Finding --- DFS heuristics, 1h timeout}

   \centering

   \resizebox{\textwidth}{!}{%
   \begin{tabular}{|l|r|r|r|r|r|r|C{1cm}|}
     \cline{2-8}
     \multicolumn{1}{c|}{}        
     & \multicolumn{3}{c|}{\textbf{DSE slowdown}} & \multicolumn{3}{c|}{\textbf{Runtime overhead}} & \multirow{2}{*}{\textbf{\#TO}} \\
     \cline{1-7}
     \textbf{Transformation} & min & max & avg & min & max & avg & \\
     \hline\hline
     \textbf{Virt} & $\times1.0$ & $\times4.0$ & $\times1.6$ & $\times1.2$ & $\times1.4$ & $\times1.3$ & $0/15$ \\
     \textbf{Virt $\times2$} & $\times6$ & $\times268$ & $\times33$ & $\times1.3$ & $\times6.3$ & $\times2.5$ & $0/15$ \\
     \textbf{Virt $\times3$} &  $\times557$ &  $\times4.7\cdot10^3$ &  $\times1.7\cdot10^3$ &  $\times5.5$ &  $\times513$ &  $\times126$ &  $2/15$\\
     \textbf{Flat-Virt} & $\times1.0$ & $\times8.3$ & $\times2.3$ & $\times1.2$ & $\times1.5$ & $\times1.3$ & $0/15$ \\
     \textbf{Flat-MBA} & $\times2.0$ & $\times878$ & $\times59$ & $\times1.2$ & $\times1.3$ & $\times1.3$ & $0/15$ \\
     \hline\hline
     \textbf{\obsplit ($k=10$)} & $\times1.1$ & $\times9$ & $\times6$ &  $\showmore \times1.0$ & $\times1.3$ & $\times1.2$ & $0/15$ \\
     \textbf{\obsplit ($k=13$)} & $\times1.1$ & $\times12$ & $\times8$ & $\times1.3$ & $\times1.8$ & $\times1.6$ & $0/15$ \\
     \textbf{\obfor ($k=1$)} & $\times7$ & $\times1.1\cdot10^3$ & $\times169$ &  $\showmore \times1.0$ &  $\showmore \times 1.1$ &  $\showmore \times1.0$ & $0/15$ \\
     \textbf{For ($k=2$)} &  $\times841$ &  $\times1.7\cdot10^5$ &  $\times17\cdot10^3$ &   $\showmore \times1.0$ &   $\showmore \times 1.1$ &   $\showmore \times1.0$ &  $1/15$ \\
     \textbf{For ($k=3$)} &  $\times2.3\cdot10^3$ &  $\times3.6\cdot10^5$ &  $\times1.6\cdot10^5$ &   $\showmore \times1.0$ &   $\showmore \times 1.1$ &   $\showmore \times1.0$ &  $8/15$ \\
     \textbf{For ($k=4$)} &  $\showmore \times2.1\cdot10^5$ &  $\showmore
                                                 \times4.2\cdot10^5$ &
                                                                       $\showmore
                                                                       \times3.2\cdot10^5$
                      &   $\showmore \times1.0$ &   $\showmore \times
                                                  1.1$ &   $\showmore
                                                         \times1.0$ &
                                                                      $\showmore
                                                                      15/15$ \\
     \hline
 \end{tabular}}
   \label{tab:secret-finding}
 \end{table*}

\end{document}